\title{Maximizing Social Welfare in Score-Based\\ Social Distance Games}
\author{%
	Robert Ganian
	\institute{TU Wien\\Vienna, Austria}
	\email{rganian@gmail.com}
\and
	Thekla Hamm
	\institute{Utrecht University\\Utrecht, Netherlands}
	\email{thekla.hamm@gmail.com}
\and
	Dušan Knop
	\institute{Czech Technical University in Prague\\Prague, Czech Republic}
	\email{dusan.knop@fit.cvut.cz}
\and
	Sanjukta Roy
	\institute{Penn State University\\Pennsylvania, USA}
	\email{sanjukta@psu.edu}
\and
	Šimon Schierreich
	\institute{Czech Technical University in Prague\\Prague, Czech Republic}
	\email{schiesim@fit.cvut.cz}
\and
	Ondřej Suchý
	\institute{Czech Technical University in Prague\\Prague, Czech Republic}
	\email{ondrej.suchy@fit.cvut.cz}
}
\newcommand{\N}{\ensuremath{\mathbb{N}}}
\newcommand{\Z}{\ensuremath{\mathbb{Z}}}
\newcommand{\cdiam}{\ensuremath{{\delta}}}
\newcommand{\bigoh}{\ensuremath{\mathcal{O}}}
\newcommand{\Oh}[1]{\ensuremath{{\bigoh(#1)}}}
\newcommand{\cc}[1]{{\mbox{\textnormal{\textsf{#1}}}}\xspace}  
\newcommand{\FPT}{\cc{FPT}}
\newcommand{\XP}{\cc{XP}}
\newcommand{\NP}{{\cc{NP}}}
\newcommand{\NPh}{\NP-hard\xspace}
\newcommand{\YES}{\cc{Yes}}
\newcommand{\NO}{\cc{No}}
\newcommand{\Yes}{\YES}
\newcommand{\No}{\NO}
\newcommand{\YesI}{\YES-instance\xspace}
\newtheorem{theorem}{Theorem}
\crefname{theorem}{Theorem}{Theorems}
\crefname{observation}{Observation}{Observations}
\newtheorem{lemma}[theorem]{Lemma}
\crefname{lemma}{Lemma}{Lemmas}
\newtheorem{corollary}[theorem]{Corollary}
\crefname{corollary}{Corollary}{Corollaries}
\newtheorem{proposition}[theorem]{Proposition}
\crefname{proposition}{Proposition}{Propositions}
\crefname{conjecture}{Conjecture}{Conjectures}
\crefname{claim}{Claim}{Claims}
\theoremstyle{remark}
\crefname{example}{Example}{Examples}
\newcommand{\DGs}{\textsc{SDG}\xspace}
\newcommand{\DGNS}{\val-\textsc{SDG-WF-Nash}\xspace}
\newcommand{\DGIR}{\val-\textsc{SDG-WF-IR}\xspace}
\newcommand{\DGWF}{\val-\textsc{SDG-WF}\xspace}
\newcommand{\dist}{\ensuremath{\operatorname{dist}}} 
\newcommand{\val}{\ensuremath{\vec{\operatorname{s}}}\xspace} 
\newcommand{\maxval}{s_1\xspace}
\newcommand{\util}{\ensuremath{\operatorname{u}}\xspace} 
\newcommand{\SW}{\ensuremath{\operatorname{SW}}\xspace} 
\newcommand{\topo}{\ensuremath{\mathsf{T}}\xspace} 
\newcommand{\tw}{\ensuremath{\operatorname{tw}}\xspace} 
\newcommand{\sz}{\ensuremath{\operatorname{sz}}\xspace} 
\newcommand{\vc}{\ensuremath{\operatorname{vc}}\xspace} 
\begin{document}

\maketitle

\begin{abstract}
Social distance games have been extensively studied as a coalition formation model where the utilities of agents in each coalition were captured using a utility function $\util$ that took into account distances in a given social network. In this paper, we consider a non-normalized score-based definition of social distance games where the utility function $u^{\val}$ depends on a generic scoring vector $\val$, which may be customized to match the specifics of each individual application scenario.

As our main technical contribution, we establish the tractability of computing a welfare-ma\-xi\-mi\-zing partitioning of the agents into coalitions on tree-like networks, for every score-based function $u^{\val}$. We provide more efficient algorithms when dealing with specific choices of $u^{\val}$ or simpler networks, and also extend all of these results to computing coalitions that are Nash stable or individually rational.
We view these results as a further strong indication of the usefulness of the proposed score-based utility function: even on very simple networks, the problem of computing a welfare-maximizing partitioning into coalitions remains open for the originally considered canonical function $\util$.
\end{abstract}

\section{Introduction}

Coalition formation is a central research direction within the fields of algorithmic game theory and computational social choice. While there are many different scenarios where agents aggregate into coalitions, a pervasive property of such coalitions is that the participating agents exhibit \emph{homophily}, meaning that they prefer to be in coalitions with other agents which are similar to them. It was this observation that motivated Br{\^{a}}nzei and Larson to introduce the notion of \emph{social distance games} (SDG) as a basic model capturing the homophilic behavior of agents in a social network~\cite{BranzeiL2011}.

Br{\^{a}}nzei and Larson's SDG model consisted of a graph $G=(V,E)$ representing the social network, with $V$ being the agents and $E$ representing direct relationships or connections between the agents. To capture the utility of an agent $v$ in a coalition $C\subseteq V$, the model considered a single function: $u(v,C)=\frac{1}{|C|}\cdot \sum\nolimits_{w\in C\setminus \{v\}}\frac{1}{d_C(v,w)}$ where $d_C(v,w)$ is the distance between $v$ and $w$ inside $C$.

Social distance games with the aforementioned utility function $\util$ have been the focus of extensive study to date, with a number of research papers specifically targeting algorithmic and complexity-theoretic aspects of forming coalitions with maximum social welfare~\cite{BalliuFMO17,BalliuFMO19,BalliuFMO22,KaklamanisKP18}.
Very recently, Flammini et al.~\cite{FlamminiKOV2020,FlamminiKOV2021} considered a generalization of $\util$ via an adaptive real-valued scoring vector which weights the contributions to an agent's utility according to the distances of other agents in the coalition, and studied the price of anarchy and stability for non-negative scoring vectors.
However, research to date has not revealed any polynomially tractable fragments for the problem of computing coalition structures with maximum social welfare (with or without stability-based restrictions on the behavior of individual agents), except for the trivial cases of complete (bipartite) graphs~\cite{BranzeiL2011} and trees~\cite{OkuboHO19}.

\paragraph{Our Contribution.}
The undisputable appeal of having an adaptive scoring vector---as opposed to using a single canonical utility function $\util$---lies in the fact that it allows us to capture many different scenarios with different dynamics of coalition formation. However, it would also be useful for such a model to be able to assign negative scores to agents at certain (larger) distances in a coalition.
For instance, guests at a gala event may be keen to accept the presence of friends-of-friends (i.e., agents at distance $2$) at a table, while friends-of-friends may be less welcome in private user groups on social networks, and the presence of complete strangers in some scenarios may even be socially unacceptable.

Here, we propose the study of social distance games with a family of highly generic non-normalized score-based utility functions.
Our aim here is twofold. First of all, these should allow us to better capture situations where agents at larger distances are unwelcome or even unacceptable for other agents.
At the same time, we also want to obtain algorithms capable of computing welfare-maximizing coalition structures in such general settings, at least on well-structured networks.

Our model considers a graph $G$ accompanied with an integer-valued, fixed but adaptive \emph{scoring vector} $\val$ which captures how accepting agents are towards other agents based on their pairwise distance.\footnote{Formal definitions are provided in the Preliminaries.} The utility function $u^{\val}(v,C)$ for an agent $v$ in coalition $C$ is then simply defined as $u^{\val}(v,C)=\sum\nolimits_{w\in C\setminus \{v\}} \val(d_C(v,w))$; we explicitly remark that, unlike previous models, this is not normalized with respect to the coalition size. As one possible example, a scoring vector of $(1,0,-1)$ could be used in scenarios where agents are welcoming towards friends, indifferent to friends-of-friends, slightly unhappy about friends-of-friends-of-friends (i.e., agents at distance $3$), and unwilling to group up with agents who are at distance greater than $3$ in $G$.
A concrete example which also illustrates the differences to previous SDG models is provided in \Cref{fig:example}.

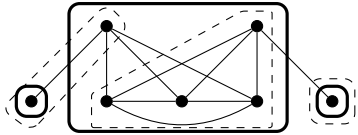
\begin{figure}
	\begin{center}
		\begin{tikzpicture}
			[
			every node/.style={draw, fill=black, shape=circle, inner sep=0pt, text width=1.5mm, align=center, label distance=1mm}
			]
			\node (x1) at  (-2,0) {};
			\node (a1) at  (-1,0) {};
			\node (a2) at  (0,0) {};
			\node (a3) at  (1,0) {};
			\node (y1) at  (2,0) {};

			\node (x) at (-1,1) {};
			\node (y) at (1,1) {};

			\foreach \i in {a1,a2,a3}
			{
				\draw (x) to (\i) to (y);
			}
			\draw (x) to (x1);
			\draw (y) to (y1);
			\draw (a1) to (a2) to (a3) to[bend left] (a1);

			\draw[very thick,rounded corners] (-1.5,-0.4) rectangle (1.4,1.3);
			\draw[very thick,rounded corners] (-2.2,-0.2) rectangle (-1.8,0.2);
			\draw[very thick,rounded corners] (2.2,-0.2) rectangle (1.8,0.2);

			\draw[dashed,rounded corners] (-2,-0.4) -- (-0.7,1) -- (-1.,1.3) -- (-2.4,-0.1) -- cycle;
			\draw[dashed,rounded corners=0.3mm] (-1.2,-0.35) -- (1.2,-0.35) -- (1.2,1.2) -- (0.8,1.2) -- (-1.2,0.1) -- cycle;
			\draw[dashed,rounded corners] (2.3,-0.3) rectangle (1.7,0.3);
		\end{tikzpicture}
		\caption{A social network illustrating the difference of maximising social welfare in our model compared to previous SDG models. (1) In Br{\^{a}}nzei and Larson's SDG model, the welfare-maximum outcome is the grand coalition. (2) A welfare-maximum outcome in the normalized model of Flammini et al.\ with a scoring vector of $(1,0,0,0)$ is marked with dashed lines, while the same scoring vector in our non-normalized model produces the grand coalition. (3) A scoring vector of \(\val = (1,0,-1)\) in our model produces the welfare-maximizing outcome marked with bold lines, with a welfare of $18$. (4) A `less welcoming' scoring vector of \(\val = (1,-3)\) leads to the welfare maximizing dash-circled partition with a welfare of \(14\) (compared to only \(12\) for the bold-circled one).
			\label{fig:example}}
	\end{center}
\end{figure}

While non-normalized scoring functions have not previously been considered for social distance games, we view them a natural way of modeling agent utilities; in fact, similar ideas have been successfully used in models for a variety of other phenomena including, e.g., committee voting~\cite{ElkindI15}, resource allocation~\cite{BrandtCELP16,BouveretL08}
and Bayesian network structure learning~\cite{GanianK21,OrdyniakS13}.
Crucially, it is not difficult to observe that many of the properties originally established by Br{\^{a}}nzei and Larson for SDGs also hold for our non-normalized score-based model with every choice of $\val$, such as the small-world property~\cite{BranzeiL2011,jackson2008social} and the property that adding an agent with a close (distant) connection to a coalition positively (negatively) impacts the utilities of agents~\cite{BranzeiL2011}.
In addition, the proposed model can also directly capture the notion of \emph{enemy aversion} with symmetric preferences~\cite{BarrotY19,OhtaBISY17} by setting $\val=(1)$.

Aside from the above, a notable benefit of the proposed model lies on the complexity-theoretic side of things. Indeed, a natural question that arises in the context of SDG is whether we can compute an outcome---a partitioning of the agents into coalitions---which maximizes the social welfare (defined as the sum of the utilities of all agents in the network). This question has been studied in several contexts, and depending on the setting one may also require the resulting coalitions to be stable under \emph{individual rationality} (meaning that agents will not remain in coalitions if they have negative utility) or \emph{Nash stability} (meaning that agents may leave to join a different coalition if it would improve their utility). But in spite of the significant advances in algorithmic aspects of other coalition formation problems in recent years~\cite{BoehmerE20,BoehmerE20b,ChenGH20,GanianHKSS22},
we lack any efficient algorithm capable of producing such a welfare-optimal partitioning when using the utility function $\util$ even for the simplest types of networks.

To be more precise, when viewed through the refined lens of \emph{parameterized complexity}~\cite{CyganFKLMPPS2015,DowneyF13} that has recently become a go-to paradigm for such complexity-theoretic analysis, no tractable fragments of the problem are known. More precisely, the problem of computing a welfare-maximizing outcome under any of the previously considered models is not even known to admit an \XP\ algorithm when parameterized by the minimum size of a vertex cover in the social network $G$---implying a significant gap towards potential fixed-parameter tractability. This means that the complexity of welfare-maximization under previous models remains wide open even under the strongest non-trivializing restriction of the network.

As our main technical contribution, we show that non-normalized score-based utility functions do not suffer from this drawback and can in fact be computed efficiently under fairly mild restrictions on $G$. Indeed, as our first algorithmic result we obtain an \XP\ algorithm that computes a welfare-maximizing partitioning of the agents into coalitions parameterized by the treewidth of $G$, and we strengthen this algorithm to also handle additional restrictions on the coalitions in terms of individual rationality or Nash stability.
As with numerous treewidth-based algorithms, we achieve this result via leaf-to-root dynamic programming along a tree-decomposition. However, the records we keep during the dynamic program are highly non-trivial and require an advanced branching step to correctly pre-computed the distances in the stored records.
We remark that considering networks of small treewidth is motivated not only by the fundamental nature of this structural graph measure, but also by the fact that many real-world networks exhibit bounded treewidth~\cite{ManiuSJ19}.

In the next part of our investigation, we show that when dealing with simple scoring functions or bounded-degree networks, these results can be improved to fixed-parameter algorithms for welfare-maximization (including the cases where we require the coalitions to be individually rational or Nash stable). This is achieved by combining structural insights into the behavior of such coalitions with a different dynamic programming approach. Furthermore, we also use an entirely different technique based on quadratic programming to establish the fixed-parameter tractability of all 3 problems under consideration w.r.t.\ the minimum size of a vertex cover in $G$. Finally, we conclude with some interesting generalizations and special cases of our model and provide some preliminary results in these directions.


\section{Preliminaries}
We use $\N$ to denote the set of natural numbers, i.e., positive integers, and $\Z$ for the set of integers. For $i\in \N$, we let $[i]= \{1,\ldots,i\}$ and ${[i]_0 = [i] \cup \{0\}}$.
We assume basic familiarity with graph-theoretic terminology~\cite{Diestel17}.

\paragraph{Social Distance Games.}
A \emph{social distance game} (SDG) consists of a set $N = \{1,\ldots,n\}$ of \emph{agents}, a simple undirected graph $G=(N,E)$ over the set of agents called a \emph{social network}, and a non-increasing \emph{scoring vector} $\val=(s_1,\dots,s_{\cdiam})$ where
\begin{enumerate*}[label=\alph*)]
	\item for each $a\in [\cdiam]$, $s_a\in \Z$ and
	\item for each $a\in [\cdiam-1]$, $s_{a+1}\leq s_a$.
\end{enumerate*}

In some cases, it will be useful to treat $\val$ as a function from $\N$ rather than a vector; to this end, we set $\val(a)=s_a$ for each $a\leq \cdiam$ and $\val(a)=-\infty$ when $a>\cdiam$.
The value ``$-\infty$'' here represents an inadmissible outcome, and formally we set $-\infty+z=-\infty$ and $-\infty<z$ for each $z\in \Z$.

A \emph{coalition} is a subset $C\subseteq N$, and an outcome is a partitioning $\Pi=(C_1,\dots,C_\ell)$ of $N$ into coalitions; formally,
$\bigcup_{i=1}^\ell C_i = N$, every $C_i\in \Pi$ is a coalition, and all coalitions in $\Pi$ are pairwise disjoint. We use~$\Pi_i$ to denote the coalition the agent $i\in N$ is part of in the outcome $\Pi$.
The \emph{utility} of an agent $i\in N$ for a coalition $\Pi_i\in\Pi$ is
\[
\util^{\val}(i,\Pi_i) = \sum_{j\in \Pi_i\setminus\{i\}} \val(\dist_{\Pi_i}(i,j)),
\]
where $\dist_{\Pi_i}(i,j)$ is the length of a shortest path between $i$ and $j$ in the graph $G[\Pi_i]$, i.e., the subgraph of~$G$ induced on the agents of $\Pi_i$. We explicitly note that if $\Pi_i$ is a singleton coalition then $\util^{\val}(i,\Pi_i)=0$. Moreover, in line with previous work~\cite{BranzeiL2011} we set $\dist_{\Pi_i}(i,j):=+\infty$ if there is no $i$-$j$ path in $G[\Pi_i]$, meaning that $\util^{\val}(i,\Pi_i)=-\infty$ whenever $G[\Pi_i]$ is not connected.

For brevity, we drop the superscript from $u^{\val}$ whenever the scoring vector \val\ is clear from the context. To measure the satisfaction of the agents with a given outcome, we use the well-known notation of \emph{social welfare}, which is the total utility of all agents for an outcome $\Pi$, that is,
\[
\SW^{\val}(\Pi) = \sum_{i\in N} \util^{\val}(i,\Pi_i).
\]
Here, too, we drop the superscript specifying the scoring vector whenever it is clear from the context.

We assume that all our agents are selfish, behave strategically, and their aim is to maximize their utility. To do so, they can perform \emph{deviations} from the current outcome $\Pi$. We say that $\Pi$ admits an \emph{IR-deviation} if there is an agent $i\in N$ such that $\util(i,C) < 0$; in other words, agent $i$ prefers to be in a singleton coalition over its current coalition. If no agent admits an IR-deviation, the outcome is called \emph{individually rational} (IR). We say that $\Pi$ admits an \emph{NS-deviation} if there is an agent $i$ and a coalition $C\in \Pi\cup \{\emptyset\}$ such that $\util(i,C\cup\{i\}) > \util(i,\Pi_i)$. $\Pi$~is called \emph{Nash stable} (NS) if no agent admits an NS-deviation.
We remark that other notions of stability exist in the literature~\cite[Chapter 15]{BrandtCELP16}, but Nash stability and individual rationality are the most basic notions used for stability based on individual choice~\cite{Karakaya11,SungD07}.

Having described all the components in our score-based SDG model, we are now ready to formalize the three classes of problems considered in this paper. We note that even though these are stated as decision problems for complexity-theoretic reasons, each of our algorithms for these problems can also output a suitable outcome as a witness. For an arbitrary fixed scoring vector $\val$, we define:

\begin{center}
	\begin{boxedminipage}{0.98 \columnwidth}
		\DGWF\\[5pt]
		\begin{tabular}{l p{0.78 \columnwidth}}
			Input: & A social network $G=(N,E)$, desired welfare $b \in \N$.\\
			Question: \hspace{-0.4cm} & Does the distance game given by $G$ and $\val$ admit an outcome with social welfare at least~$b$?
		\end{tabular}
	\end{boxedminipage}
\end{center}

\DGIR\ and \DGNS\ are then defined analogously, but with the additional condition that the outcome must be individually rational or Nash stable, respectively.

We remark that for each of the three problems, one may assume w.l.o.g.\ that $s_1>0$; otherwise the trivial outcome consisting of $|N|$ singleton coalitions is both welfare-optimal and stable.
Moreover, without loss of generality we assume $G$ to be connected since an optimal outcome for a disconnected graph $G$ can be obtained as a union of optimal outcomes in each connected component of $G$.

The last remark we provide to the definition of our model is that it trivially also supports the
well-known \emph{small world} property~\cite{jackson2008social} that has been extensively studied on social networks.
In their original work on SDGs, Br{\^{a}}nzei and Larson showed that their model exhibits the small world property by establishing a diameter bound of $14$ in each coalition in a so-called \emph{core partition}~\cite{BranzeiL2011}.
Here, we observe that for each choice of $\val$, a welfare-maximizing coalition will always have diameter at most $\cdiam$.

\paragraph{Parameterized Complexity.} The \emph{parameterized complexity} framework~\cite{CyganFKLMPPS2015,DowneyF13} provides the ideal tools for the fine-grained analysis of computational problems which are \NPh\ and hence intractable from the perspective of classical complexity theory. Within this framework, we analyze the running times of algorithms not only with respect to the input size $n$, but also with respect to a numerical parameter $k\in\N$ that describes a well-defined structural property of the instance; the central question is then whether the superpolynomial component of the running time can be confined by a function of this parameter alone.

The most favorable complexity class in this respect is \FPT (short for ``fixed-parameter tractable'') and contains all problems solvable in $f(k)\cdot n^\Oh{1}$ time, where $f$ is a computable function. Algorithms with this running time are called \emph{fixed-parameter algorithms}. A less favorable, but still positive, outcome is an algorithm with running time of the form~$n^{f(k)}$; problems admitting algorithms with such running times belong to the class \XP.

\paragraph{Structural Parameters.} Let $G=(V,E)$ be a graph. A set $U\subseteq V$ is a \emph{vertex cover} if for every edge $e\in E$ it holds that $U\cap e \not= \emptyset$. The \emph{vertex cover number} of $G$, denoted $\vc(G)$, is the minimum size of a vertex cover of $G$.
A \emph{nice tree-decomposition} of $G$ is a pair $(\mathcal{T},\beta)$, where $\mathcal{T}$ is a tree rooted at a node $r\in V(\mathcal{T})$, $\beta\colon V(\mathcal{T})\to 2^{V}$ is a function assigning each node $x$ of $\mathcal{T}$ its \emph{bag}, and the following conditions hold:
\begin{itemize}
	\item for every edge $\{u,v\}\in E(G)$ there is a node $x\in V(\mathcal{T})$ such that $u,v\in\beta(x)$,
	\item for every vertex $v\in V$, the set of nodes $x$ with $v\in\beta(x)$ induces a connected subtree of $\mathcal{T}$,
	\item $|\beta(r)|=|\beta(x)| = 0$ for every \emph{leaf} $x\in V(\mathcal{T})$, and
	\item there are only tree kinds of internal nodes in $\mathcal{T}$:
	\begin{itemize}
		\item $x$ is an \emph{introduce node} if it has exactly one child $y$ such that $\beta(x) = \beta(y)\cup\{v\}$ for some ${v\notin\beta(y)}$,
		\item $x$ is a \emph{join node} if it has exactly two children $y$ and $z$ such that $\beta(x) = \beta(y) = \beta(z)$, or
		\item $x$ is a \emph{forget node} if it has exactly one child $y$ such that $\beta(x) = \beta(y)\setminus\{v\}$ for some $v\in\beta(y)$.
	\end{itemize}
\end{itemize}
The \emph{width} of a nice tree-decomposition $(\mathcal{T},\beta)$ is $\max_{x\in V(\mathcal{T})} |\beta(x)|-1$, and the treewidth $\tw(G)$ of a graph $G$ is the minimum width of a nice tree-decomposition of $G$. Given a nice tree-decomposition and a node~$x$, we denote by $G^x$ the subgraph induced by the set $V^x = \bigcup_{y\text{ is a descendant of }x}\beta(y)$, where we suppose that~$x$ is a descendant of itself.
It is well-known that optimal nice tree-decompositions can be computed efficiently~\cite{Bodlaender96,Kloks94,Korhonen21}. 

\smallskip
\noindent \textbf{Integer Quadratic Programming.}\quad
\textsc{Integer Quadratic Programming} (IQP) over $d$ dimensions can be formalized as the task of computing
\begin{equation}\label{eq:generalIQP}\tag{IQP}
	\max \left\{ x^{T} Q x \mid A x \le b,\, x \ge 0 ,\, x \in \mathbb{Z}^d \right\} \,,
\end{equation}
where $Q \in \mathbb{Z}^{d \times d}$, $A  \in \mathbb{Z}^{m \times d}$, $b \in \mathbb{Z}^{m}$.
That is, IQP asks for an integral vector $x \in \mathbb{Z}^d$ which maximizes the value of a quadratic form subject to satisfying a set of linear constraints.

\begin{proposition}[{\cite{Lokshtanov15,Zemmer2017}}, see also~\cite{GavenciakKK22}]\label{prop:IQPisFPT}
	\textsc{Integer Quadratic Programming} is fixed-parameter tractable when parameterized by $d+\|A\|_{\infty}+ \|Q\|_{\infty}$.
\end{proposition}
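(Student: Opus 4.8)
This statement is a known result, quoted from \cite{Lokshtanov15,Zemmer2017} (see also \cite{GavenciakKK22}), so no new proof is needed; what follows is the strategy one uses to obtain it. The plan is to reduce a general \emph{indefinite} instance to the two situations that are already tractable in fixed dimension: \textsc{Integer Linear Programming}, which is fixed-parameter tractable when parameterized by the dimension $d$ alone (Lenstra's and Kannan's algorithms), and \emph{convex integer minimization}, which is likewise solvable by a fixed-parameter algorithm parameterized by $d$. The extra parameter $\|Q\|_\infty$ is exactly what makes the indefinite case amenable to such a reduction.

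First I would dispose of the degenerate cases. Solving the LP relaxation reveals whether the feasible polyhedron $P=\{x\mid Ax\le b,\ x\ge 0\}$ is empty and whether $x^{T}Qx$ is unbounded from above on the recession cone of $P$ (in which case we report ``unbounded''); and if $P$ is not full-dimensional, its affine hull is cut out by a bounded number of equalities, so we may pass to an equivalent instance over a lower-dimensional lattice and recurse on $d$. Hence we may assume that $P$ is a full-dimensional bounded polytope whose describing data still have $\infty$-norm bounded in the parameter, and (replacing $Q$ by $(Q+Q^{T})/2$, which changes neither the value of the form nor, up to a factor of two, $\|Q\|_\infty$) that $Q$ is symmetric; its eigenvalues then lie in $[-d\|Q\|_\infty,\ d\|Q\|_\infty]$.

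The heart of the argument is the \emph{convex} part of the objective. The directions in which $Q$ is negative semidefinite contribute a concave term, which together with the linear constraints is precisely an instance of convex integer minimization and is handled by the subroutine above; the obstruction is the convex part, since maximizing a convex quadratic over a polytope is \NPh\ in general. The key idea of \cite{Lokshtanov15} is that a bound on $\|Q\|_\infty$ --- hence on the eigenvalues of $Q$ and on the relevant lattice directions --- limits how sharply the objective can curve, so that one can branch over a family, of size bounded by a function of $d+\|A\|_\infty+\|Q\|_\infty$, of candidate low-dimensional faces of $P$ (equivalently, affine sections of $P$) on which some optimal integer point is guaranteed to lie. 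In each branch the instance either drops in dimension, and we recurse, or the objective restricted to that branch is concave or linear, and the convex-integer-programming resp.\ Lenstra/Kannan subroutine finishes it; the maximum over all branches is the answer.

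The technically heaviest step --- and the one I expect to be the main obstacle --- is exactly this branching: one has to exhibit the bounded family of faces/sections explicitly, prove via flatness-type theorems and integer-rounding (proximity) bounds that an optimal integer solution always lies in one of them, and check that both the branching factor and the recursion depth are bounded purely in terms of $d+\|A\|_\infty+\|Q\|_\infty$. Granting that, every leaf of the branching tree is solved by a subroutine that is fixed-parameter in $d$ on numbers whose bit-length stays polynomial in the input, and composing these bounds yields an algorithm running in time $f(d+\|A\|_\infty+\|Q\|_\infty)\cdot n^{\Oh{1}}$, where $n$ is the input size, as claimed.
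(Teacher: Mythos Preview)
The paper does not prove this proposition at all; it is stated with citations to \cite{Lokshtanov15,Zemmer2017,GavenciakKK22} and used as a black box in the proof of Theorem~\ref{thm:WfIsNsFPTwrtVC}. You correctly recognize this in your opening sentence, so in that sense your ``proof'' matches the paper's: neither supplies one.

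As for the sketch you offer, it is a fair high-level caricature of the approach in \cite{Lokshtanov15}: separate the objective into a concave and a convex part, handle the concave/linear pieces with Lenstra--Kannan-type convex integer programming in fixed dimension, and use the bound on $\|Q\|_\infty$ to control the convex part via a branching whose fan-out depends only on the parameter. The actual argument is considerably more technical --- the branching is driven by proximity bounds between optimal fractional and integral solutions and a careful flatness/width analysis rather than a clean ``family of faces'' picture --- but your outline points in the right direction and flags the genuinely hard step. Since the paper only needs the statement, this level of detail is already more than what is required here.
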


\section{Structural Properties of Outcomes}
\label{sec:structure}

As our first set of contributions, we establish some basic properties of our model and the associated problems that are studied within this paper. We begin by showcasing that the imposition of individual rationality or Nash stability as additional constraints on our outcomes does in fact have an impact on the maximum welfare that can be achieved (and hence it is indeed necessary to consider three distinct problems). We do not consider this to be obvious at first glance: intuitively, an agent $i$'s own contribution to the social welfare can only improve if they perform an IR- or NS-deviation, and the fact that the distance function $\dist_{\Pi_i}$ is symmetric would seem to suggest that this can only increase the total social welfare.

\begin{figure}
	\centering

	\captionbox{Social Network from \Cref{lem:irrational}.\label{fig:irrational}}[.48\textwidth]{
		\centering
		\begin{tikzpicture}
			[
			every node/.style={draw, fill=black, shape=circle, inner sep=0pt, text width=1.5mm, align=center, label distance=1mm}
			]
			\node[label=left:$x$] (x) at  (0,0) {};
			\node (p2)  at (1,0.5) {};
			\node (p4) at (1,-0.5) {};
			\node (p1) at (2,0.5) {};
			\node (p5) at(2,-0.5) {};
			\node (1) at  (4,1) {};
			\node (5) at (4,-1) {};
			\node (2) at (3.6,0.5) {};
			\node (4) at (3.6,-0.5) {};
			\node (3) at (3.4,0) {};
			\draw (p1) to (p2) to node[above=2mm,draw=none,fill=none] {$P$} (x) to (p4) to (p5);
			\draw (1) to (2) to (3) to (4) to (5) to node[right,draw=none,fill=none] {$K$} (1) to (3) to (5) to (2) to (4) to (1);
			\foreach \i in {1,...,5}
			{
				\draw (p1) to (\i) to (p5);
			}
		\end{tikzpicture}
	}
	\captionbox{Social Network from \Cref{lem:non-nash}.\label{fig:non-nash}}[.48\textwidth]{
		\begin{tikzpicture}
			[
			every node/.style={draw, fill=black, shape=circle, inner sep=0pt, text width=1.5mm, align=center, label distance=1mm}
			]
			\node[label=above:$x$] (x) at  (0,0) {};
			\node (p2)  at (1,0.5) {};
			\node (p4) at (1,-0.5) {};
			\node (p1) at (2,0.5) {};
			\node (p5) at(2,-0.5) {};
			\node (1) at  (4,0.7) {};
			\node (5) at (4,-0.7) {};
			\node (2) at (3.6,0.3) {};
			\node (4) at (3.6,-0.3) {};
			\node[label=above:$y$] (y) at (-1,0) {};
			\draw (p1) to (p2) to node[above=2mm,draw=none,fill=none] {$P$} (x) to (p4) to (p5);
			\draw (x) to (y);
			\draw (1) to (2) to (4) to (5) to node[right,draw=none,fill=none] {$K$} (1) to (4);
			\draw (2) to (5);
			\foreach \i in {1,2,4,5}
			{
				\draw (p1) to (\i) to (p5);
			}
		\end{tikzpicture}
	}
\end{figure}

\begin{lemma}\label{lem:irrational}
	There is a scoring vector $\val$ and a social network $G$ such that the single outcome achieving the maximum social welfare is not individually rational.
\end{lemma}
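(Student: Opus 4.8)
The plan is to work with the network $G$ shown in \Cref{fig:irrational}: a five-vertex clique $K=\{1,\dots,5\}$, a short path $P$ attached via its two endpoints $p_1,p_5$ to every vertex of $K$, and the vertex $x$ at the far end of $P$. I would choose a scoring vector of the form $\val=(s_1,s_2,\dots)$ with a moderate positive $s_1$, a small but strictly negative value at some distance $a$, and $-\infty$ beyond — something like $\val=(2,1,0,-1)$ or $\val=(1,0,-1)$ should be in the right ballpark, but the exact entries need to be pinned down by the computation below. The point of the construction is that the clique vertices all lie within the ``acceptable'' distance range of each other and of the path vertices, so the unique welfare-maximizing outcome lumps \emph{everything} — including $x$ — into one grand coalition; but $x$, being far from $K$, ends up with several clique vertices at the maximum admissible distance, each contributing a negative score, so that $\util(x,C)<0$, witnessing that the optimum is not individually rational.

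First I would compute, for the candidate grand coalition $C=N$, the distance from $x$ to each other vertex (this is a fixed small graph, so the distances $\dist_C(x,\cdot)$ are just a handful of numbers determined by the length of $P$), and hence $\util^{\val}(x,C)$; I would check that the chosen $\val$ makes this strictly negative. Next I would argue that $C=N$ is in fact the \emph{unique} social-welfare maximizer. For this I would use the structural observation already noted in the Preliminaries, that every coalition in a welfare-maximizing outcome has diameter at most $\cdiam$ and in particular is connected, which restricts the candidate partitions severely; then I would do a finite case analysis over the ways of splitting $N$ into connected pieces of bounded diameter, showing that pulling $x$ (or any clique vertex, or any path vertex) out of the big coalition strictly decreases $\SW$. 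The symmetry of the clique and of the path attachment keeps this case analysis short: essentially one must compare the welfare of the grand coalition against (i) isolating $x$, (ii) isolating a pendant path vertex, and (iii) splitting off part of the clique, and verify the grand coalition wins each comparison strictly.

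The main obstacle will be the tension the example has to resolve: the scoring vector must be ``unwelcoming enough'' that $x$ has negative utility in the grand coalition, yet ``welcoming enough'' that the aggregate gain from keeping $x$ (and everyone else) together still beats every alternative partition — since $x$'s negative utility is, by symmetry of $\dist$, mirrored by equally negative contributions to the clique vertices' utilities from $x$. The resolution is a counting argument: $x$ contributes negatively to only a few agents, whereas removing $x$ from $C$ also destroys the large positive pairwise contributions among all the \emph{other} agents that were only realized because the coalition stayed connected through $P$; so one has to choose the length of $P$ and the entries of $\val$ so that these positive terms dominate. Getting those numbers to line up simultaneously — negative utility for $x$, strict global optimality of the grand coalition, and \emph{uniqueness} of that optimum — is the delicate part, and I expect the bulk of the proof to be the verification that a single explicit choice of $G$ and $\val$ threads this needle, with the rest following by the bounded-diameter reduction and a routine finite check.
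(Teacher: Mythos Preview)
Your plan follows the paper's approach almost exactly: same graph, grand coalition as the candidate optimum, negative utility for $x$, and a finite case check over alternative partitions. Two concrete issues would derail the execution, though.

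First, neither of the scoring vectors you float actually works on this graph. With $\val=(2,1,0,-1)$ the agent $x$ has $\util(x,N)=2\cdot 2+2\cdot 1+5\cdot 0=6>0$, so the grand coalition is already individually rational. With $\val=(1,0,-1)$ one does get $\util(x,N)=-3$, but the grand coalition is no longer welfare-optimal: the partition $\{x,p_2,p_4\}\cup\{p_1,p_5\}\cup K$ achieves welfare $4+40=44$ versus $34$ for the grand coalition. The paper instead uses $\val=(1,1,-1,-1,-1,-1)$; the point of the length-$6$ vector is that $G$ is essentially a $6$-cycle with the clique acting as one ``fat'' vertex, so every agent sees exactly the antipodal cycle position at distance~$3$ and everything else at distance $\le 2$. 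For $x$ the antipode happens to be all five clique vertices, which is what drives $\util(x,N)=4-5=-1$, while every other agent's antipode is a single vertex and they get $8-1=7$.

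Second, your explanation of \emph{why} removing $x$ hurts welfare is off. Deleting $x$ does not disconnect anything and does not ``destroy large positive pairwise contributions''; the remaining nine vertices are still connected through $K$. The only internal distance that changes is $\dist(p_2,p_4)$, which rises from $2$ to $4$; with the paper's vector that single pair swings from $+1$ to $-1$, a loss of $4$ in welfare, just barely outweighing the gain of $2$ from dropping $x$'s own negative row. This margin is genuinely thin, so the case analysis (largest coalition of size $8,7,6,5$, giving at most $56,48,42,40$) is doing real work, not just confirming a foregone conclusion. A minor point: $x$ is the \emph{central} vertex of the $5$-path, not an endpoint.
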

\begin{proof}
	Consider a scoring function $\val$ such that $\val=(1,1,-1,-1,-1,-1)$.
	Consider the social network $G$ in \Cref{fig:irrational} formed from a path $P$ on $5$ vertices and a clique $K$ on $5$ vertices by connecting the endpoints of~$P$ to all vertices of $K$.
	Let $x$ be the central agent of $P$.
	Let~$C$ be the grand coalition in $G$.
	The graph can be viewed as a $6$-cycle with $K$ forming one ``bold'' agent.
	All vertices on the cycle contribute positively to the agent's utility, except for the one that is exactly opposite on the cycle.
	Hence, $\util(x,C)=4-5=-1$, while utility of all other agents is $8-1=7$ in $C$.
	This gives total social welfare of $62$ for the grand coalition.

	However, if $x$ leaves the coalition to form its own one, their utility will improve from $-1$ to $0$, whereas the total social welfare drops.
	Indeed, in $C \setminus \{x\}$ there are 2 agents with utility $6-2=4$, 2 agents with utility $7-1=6$ and 5 agents with utility $8-0$, giving total social welfare of $60$.
	If any $y\neq x$ was to be excluded from $C$ to form outcome $\{y\}, C\setminus \{y\}$, then $y$ joining $C$ improves social welfare, proving that it was not optimal.
	Finally, if the outcome consists of several coalitions with the largest one of size 8, then the welfare is at most $8 \cdot 7+2 \cdot 1= 56$, if the largest size is 7, then we get at most $7 \cdot 6+3\cdot 2=48$, for 6 it is $6\cdot 5+4\cdot 3=42$ and for 5 it is $5 \cdot 4 +5 \cdot 4=40$.

	Hence the grand coalition $C$ is the only outcome with maximal social welfare, but it is not individually rational (and therefore not Nash stable), as $\util(x,C)=-1$.
\end{proof}

\begin{lemma}\label{lem:non-nash}
 	There is a scoring vector $\val$ and a social network $G$ such that the single individually rational outcome achieving the maximum social welfare among such outcomes is not Nash stable.
\end{lemma}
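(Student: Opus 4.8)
The plan is to mirror the construction of \Cref{lem:irrational}: take a scoring vector that is positive up to some distance and strongly negative afterwards, build a graph that is "almost" a short cycle with one vertex blown up into a clique $K$, and attach a pendant structure that forces a specific welfare-optimal configuration while leaving room for a profitable NS-deviation. The new ingredients are the extra vertex $y$ pendant to $x$ and the fact that $K$ now has only four vertices arranged so that $G[K]$ has diameter $2$ (it is $K_4$ minus the edge $\{1,5\}$, say, with $\{2,4\}$ and $\{2,5\}$ present). The scoring vector $\val$ should again be of the form $(1,1,-1,\dots)$ with enough $-1$'s so that every coalition of diameter exceeding $2$ that is still connected incurs penalties; the exact length I would fix only after the welfare bookkeeping below.

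First I would identify the candidate outcomes. The intended unique IR-welfare-optimizer is $\Pi^\star = \{\,\{y\},\,C\,\}$ where $C = V(G)\setminus\{y\}$ is the grand coalition on the "cycle-with-a-fat-vertex" part; the pendant $y$ is excluded precisely because, if $y$ joined $C$, the long path $P$ would place $y$ at distance $>2$ from the far side of the cycle, and the accumulated $-1$ contributions (counted twice, for $y$ and for each far agent) would drag the welfare below that of $\Pi^\star$ — so $\Pi^\star$ is optimal among IR outcomes. At the same time, inside $C$ the agent $x$ sits on the cycle with exactly the vertices of $K$ at distance $2$ or $3$; with the chosen $\val$ its utility $\util(x,C)$ will be nonnegative (so $\Pi^\star$ is individually rational) but the point is that $x$'s NS-deviation of moving into the singleton $\{y\}$ to form $\{x,y\}$ yields utility $\util(x,\{x,y\}) = s_1 = 1$, which strictly exceeds $\util(x,C)$. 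So I would calibrate $\val$ and the size of $K$ / length of $P$ so that simultaneously (i) $\util(x,C)\in\{0\}$ (or at least $<1$ and $\ge 0$), (ii) every other agent in $C$ has nonnegative utility, and (iii) no deviation by any agent other than $x$ is improving, and no alternative IR outcome matches the welfare of $\Pi^\star$.

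The bulk of the write-up is then the same style of exhaustive welfare comparison as in \Cref{lem:irrational}: compute $\SW(\Pi^\star)$ exactly, then bound $\SW$ of every other outcome. I would organize the latter by the partition type restricted to the cycle part — the grand coalition on all $|V(G)|$ vertices, the grand coalition minus $y$, and partitions whose largest block is smaller — showing each gives strictly less welfare or fails individual rationality or already admits an NS-deviation; and separately argue that in $\Pi^\star$ the only improving deviation is $x\to\{y\}$. The main obstacle, as in the previous lemma, is purely the arithmetic of making the numbers line up: I must choose the length of $P$, the exact non-edges inside $K$, and the tail of $\val$ so that $x$'s utility in $C$ is pinned below $1$ while no one else wants to deviate and no competing partition ties the optimum. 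Once such parameters are fixed (the figure already commits to $|P|=5$, $|K|=4$ with two missing edges, and a pendant $y$), the proof is a finite check; I would present the chosen $\val$, state $\util(\cdot,C)$ for each agent class, give $\SW(\Pi^\star)$, list the upper bounds for partitions by maximum block size exactly as in \Cref{lem:irrational}, and conclude that $\Pi^\star$ is the unique IR welfare-maximum outcome yet admits the NS-deviation $x\mapsto\{x,y\}$, hence is not Nash stable.
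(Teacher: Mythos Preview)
Your plan is essentially the paper's own proof: the same scoring vector $\val=(1,1,-1,-1,-1,-1)$, the same graph (path $P$ on five vertices, a $K$ on four vertices joined to the endpoints of $P$, pendant $y$ at the centre $x$), the same candidate outcome $(\{y\},C)$ with $\util(x,C)=0$, the same NS-deviation $x\mapsto\{x,y\}$, and the same case analysis by largest-block size to rule out competitors. One correction: you misread the figure --- $K$ is a genuine clique $K_4$ (all six edges are drawn), not $K_4$ minus two edges, and the paper's arithmetic relies on this; with that fixed your bookkeeping will match exactly.
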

\begin{proof}
	Consider again the scoring function $\val=(1,1,-1,-1,-1,-1)$.
	Similarly to previous lemma, consider the social network $G$ in \Cref{fig:non-nash} formed from a path $P$ on $5$ vertices and a clique $K$ on $4$ vertices by connecting the endpoints of~$P$ to all vertices of $K$ and adding a agent $y$ only connected to the central agent of $P$ which we call $x$.
	Let~$C$ be the coalition containing all vertices of $G$ except for $y$.
	As in the previous lemma, $G[C]$ can be viewed as a $6$-cycle with $K$ forming one ``bold'' agent.
	Hence,~$\util_x(C)=4-4=0$, while utility of other agents in $C$ is $7-1=6$.
	Trivially $\util_y(\{y\})=0$, hence the outcome $(\{y\},C)$ is individually rational.
	It has total social welfare of $48$.
	However, it is not Nash stable, as $x$ wants to deviate to $\{x,y\}$ giving them utility $1$.

	However, the outcome $(\{x,y\}, C\setminus\{x\})$, which is Nash stable, has total social welfare only $46$.
	Note that $\util_z(C\setminus\{x\}) \ge 3$ for every agent $z \in C\setminus\{x\}$, so any outcome $(\{x,y,z\}, C\setminus\{x,z\})$ cannot be Nash stable.
	While the total social welfare of the grand coalition is $46$, the utility of $y$ is $3-6=-3$ in this coalition, so this outcome is not even individually rational.
	From the computations in the previous lemma, it follows, that to attain the social welfare of $48$, the largest coalition in the outcome must be of size at least \(7\).
	Moreover, if it is of size exactly \(7\), then these \(7\) vertices must be at mutual distance at most~$2$.
	However, there are no \(7\) vertices in mutual distance at most \(2\) in $G$.
	Hence, in any outcome with social welfare $48$ the largest coalition must be of size at least $8$.
	Agent $y$ has only \(3\) agents in distance at most \(2\) in $G$.
	Hence, for $y$ to get a positive utility from some coalition, the coalition must be of size at most \(7\), i.e., $y$ cannot be part of the largest coalition in any outcome with social welfare at least \(48\).
	However, for every $z \in C$, $z$ joining the coalition $C\setminus \{z\}$ improves the social welfare of the outcome, proving that it was not optimal.

	Hence the outcome $(\{y\},C)$ is the only individually rational outcome with maximal social welfare, but it is not Nash stable.
\end{proof}

It should be noted that \Cref{lem:irrational,lem:non-nash} also contrast many other models where outputs maximizing social welfare are stable for symmetric utilities~\cite{BogomolnaiaJ2002,BiloMM22,BullingerS2023}.

As our next two structural results, we prove that on certain SDGs it is possible to bound not only the diameter but also the size of each coalition in a welfare-maximum outcome. Notably, we establish such bounds for SDGs on bounded-degree networks and SDGs which have a simple scoring vector on a tree-like network. While arguably interesting in their own right, these properties will be important for establishing the fixed-parameter tractability of computing welfare-optimal outcomes in the next section.

\begin{lemma}\label{lem:maxdeg-coal-size}
	For every scoring vector $\val=(\maxval,\ldots,s_\delta)$, if $G$ is a graph of maximum degree $\Delta(G)$ and $C$ is a coalition of size more than $(\maxval+1) \cdot \Delta(G) \cdot (\Delta(G)-1)^{\cdiam-1}$, then for every $i \in C$ we have $\util(i,C) <0$.
\end{lemma}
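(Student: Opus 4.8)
The key observation is that the number of agents within distance $\cdiam$ of any fixed agent $i$ in $G$ is bounded, because $G$ has bounded degree. I would start by counting: from $i$, there are at most $\Delta(G)$ neighbours at distance $1$, at most $\Delta(G)(\Delta(G)-1)$ vertices at distance $2$, and in general at most $\Delta(G)(\Delta(G)-1)^{a-1}$ vertices at distance exactly $a$ for each $a \ge 1$. Hence the total number of agents $j \neq i$ with $\dist_G(i,j) \le \cdiam$ is at most $\sum_{a=1}^{\cdiam} \Delta(G)(\Delta(G)-1)^{a-1} \le \cdiam \cdot \Delta(G) \cdot (\Delta(G)-1)^{\cdiam-1}$, and in any case at most $\maxval \cdot \Delta(G)\cdot(\Delta(G)-1)^{\cdiam-1}$ would be too weak — I will instead bound it by $\Delta(G)\cdot(\Delta(G)-1)^{\cdiam-1}$ times a small factor and feed that into the welfare inequality; the precise constant is chosen exactly so the final arithmetic closes.

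The second step is to split the contributions to $\util(i,C)$ into positive and negative parts. Since $\val$ is non-increasing and (w.l.o.g.) $s_1 > 0$, every agent $j \in C$ contributes at most $\maxval$ to $\util(i,C)$, and contributes a strictly negative amount (at most $-1$, since the entries are integers and $\val(a) \le -1$ once $\val(a) < 0$) precisely when $\dist_C(i,j)$ is large. The agents that can contribute positively are exactly those $j$ with $\dist_C(i,j) \le \cdiam$; crucially $\dist_C(i,j) \ge \dist_G(i,j)$, so these all lie among the at most $\Delta(G)\cdot(\Delta(G)-1)^{\cdiam-1}$-many (up to the small multiplicative slack) agents within distance $\cdiam$ of $i$ in $G$. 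Thus the total positive contribution is at most $\maxval$ times that count, i.e.\ at most $\maxval \cdot \Delta(G) \cdot (\Delta(G)-1)^{\cdiam-1}$ (after absorbing the $\cdiam$ factor appropriately, or more carefully tracking the geometric-sum bound). Every remaining agent of $C$ — and there are at least $|C| - \Delta(G)(\Delta(G)-1)^{\cdiam-1}$ of them, which is positive and indeed more than $\maxval \cdot \Delta(G)(\Delta(G)-1)^{\cdiam-1}$ once $|C| > (\maxval+1)\cdot \Delta(G)\cdot(\Delta(G)-1)^{\cdiam-1}$ — contributes at most $-1$. Hence $\util(i,C) \le \maxval \cdot \Delta(G)(\Delta(G)-1)^{\cdiam-1} - \bigl(|C| - 1 - \Delta(G)(\Delta(G)-1)^{\cdiam-1}\bigr) < 0$ under the size hypothesis.

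The only subtlety — and the step I would be most careful about — is making the counting of vertices at bounded distance match the claimed bound exactly. The naive geometric sum gives a bound of roughly $\frac{\Delta(G)(\Delta(G)-1)^{\cdiam}-\dots}{\Delta(G)-2}$, and one must verify this is at most $\Delta(G)\cdot(\Delta(G)-1)^{\cdiam-1}$ times a constant that, together with the positive-contribution cap of $\maxval$ per agent and the $-1$ per distant agent, produces exactly the threshold $(\maxval+1)\cdot\Delta(G)\cdot(\Delta(G)-1)^{\cdiam-1}$. Concretely: if $N_{\le\cdiam}(i)$ denotes the set of agents at distance at most $\cdiam$ from $i$ in $C$, then $\util(i,C) \le \maxval\cdot|N_{\le\cdiam}(i)| - (|C|-1-|N_{\le\cdiam}(i)|) = (\maxval+1)\cdot|N_{\le\cdiam}(i)| - |C| + 1$, so it suffices to show $|N_{\le\cdiam}(i)| \le \Delta(G)\cdot(\Delta(G)-1)^{\cdiam-1}$, whence $|C| > (\maxval+1)\cdot\Delta(G)\cdot(\Delta(G)-1)^{\cdiam-1}$ forces $\util(i,C) < 1$, i.e.\ $\le 0$ by integrality, and in fact $<0$ strictly. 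Proving $|N_{\le\cdiam}(i)| \le \Delta(G)\cdot(\Delta(G)-1)^{\cdiam-1}$ is where the BFS-layer argument must be run carefully: one shows by induction on $a$ that the number of vertices at distance exactly $a$ from $i$ is at most $\Delta(G)\cdot(\Delta(G)-1)^{a-1}$, and then bounds the sum over $a \in [\cdiam]$ by its largest term times... — here one needs either $\cdiam$ to be absorbed, or (more likely the intended reading) the statement already implicitly uses that a welfare-maximizing coalition has diameter at most $\cdiam$ so that only the single layer at distance $\cdiam$ dominates, or the authors simply use the crude bound $\sum_{a=1}^{\cdiam}\Delta(G)(\Delta(G)-1)^{a-1} \le \Delta(G)(\Delta(G)-1)^{\cdiam-1}\cdot\cdiam$ and the $\cdiam$ disappears into a reformulation. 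I would present the clean inductive layer bound and then do the arithmetic honestly, adjusting the presentation to whichever constant genuinely closes the inequality.
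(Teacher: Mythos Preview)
Your approach is essentially identical to the paper's: bound the number of agents within distance~$\cdiam$ of~$i$ by $\Delta(G)\cdot(\Delta(G)-1)^{\cdiam-1}$, cap each such agent's contribution by $s_1$, cap every remaining agent's contribution by $-1$, and observe that once $|C|>(\maxval+1)\cdot\Delta(G)\cdot(\Delta(G)-1)^{\cdiam-1}$ the negative part strictly dominates. The paper simply \emph{asserts} the ball bound $|N_{\le\cdiam}(i)|\le\Delta(G)\cdot(\Delta(G)-1)^{\cdiam-1}$ without the BFS-layer justification you sketch, so your caution about that step is more than the original proof offers; but the skeleton and arithmetic are the same, and your clean inequality $\util(i,C)\le(\maxval+1)|N_{\le\cdiam}(i)|-|C|+1$ is exactly what the paper is computing.
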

\begin{proof}
	Let $i \in C$.
	There are at most $\Delta(G) \cdot (\Delta(G)-1)^{\cdiam-1}$ agents in distance at most $\cdiam$ from $i$.
	Each of these agents contributes at most $\maxval$ to $\util(i,C)$.
	Every other agent contributes at most $-1$.
	Hence, if there are more than $(\maxval+1) \cdot \Delta(G) \cdot (\Delta(G)-1)^{\cdiam-1}$ agents in $C$, then
	more than $\maxval \cdot \Delta(G) \cdot (\Delta(G)-1)^{\cdiam-1}$ of them have a negative contribution to $\util(i,C)$ and
	\begin{equation*}
		\util(i,C) < \maxval \cdot \Delta(G) \cdot (\Delta(G)-1)^{\cdiam-1}
		-1 \cdot  \maxval \cdot \Delta(G) \cdot (\Delta(G)-1)^{\cdiam-1} =0. \qedhere
	\end{equation*}
\end{proof}

\begin{lemma}\label{lem:degen-coal-size}
	Let $\val=(s_1,\ldots,s_\delta)$ be such that $s_2 < 0$.
	If $G$ is a graph of treewidth $\tw$ and $C$ is a coalition of size more than $2(\maxval+1) \cdot \tw + 1$, then $\sum_{i \in C}\util(i,C) <0$.
\end{lemma}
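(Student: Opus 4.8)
The plan is to mimic the counting structure of the proof of \Cref{lem:maxdeg-coal-size}, but to use the treewidth bound rather than the degree bound to control how many agents can be at distance exactly $1$ from a given agent inside the coalition. Fix a coalition $C$ with $G[C]$ connected (otherwise $\sum_{i\in C}\util(i,C)=-\infty<0$ and we are done), and let $|C|=n_C$. The key observation is that $G[C]$ is a graph on $n_C$ vertices of treewidth at most $\tw$, hence it has at most $\tw\cdot n_C$ edges. Since $s_2<0$ and the scoring vector is non-increasing, every agent $j$ with $\dist_C(i,j)\ge 2$ contributes at most $s_2\le -1$ to $\util(i,C)$, while the agents at distance exactly $1$ from $i$ — i.e.\ the neighbours of $i$ in $G[C]$ — each contribute at most $s_1=\maxval$. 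Every other agent contributes at most $-1$.

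Summing over all $i\in C$, the total positive contribution is at most $\maxval$ times twice the number of edges of $G[C]$, i.e.\ at most $\maxval\cdot 2\tw\cdot n_C$; actually we should be slightly more careful and count, for each $i$, the $\deg_{G[C]}(i)$ neighbours as contributing at most $\maxval$ each and the remaining $n_C-1-\deg_{G[C]}(i)$ agents as contributing at most $-1$ each. Thus
\[
\sum_{i\in C}\util(i,C)\ \le\ \sum_{i\in C}\Bigl(\maxval\cdot\deg_{G[C]}(i)-\bigl(n_C-1-\deg_{G[C]}(i)\bigr)\Bigr)
= (\maxval+1)\sum_{i\in C}\deg_{G[C]}(i)-n_C(n_C-1).
\]
Since $\sum_{i\in C}\deg_{G[C]}(i)=2|E(G[C])|\le 2\tw\, n_C$, the right-hand side is at most $2(\maxval+1)\tw\, n_C-n_C(n_C-1)=n_C\bigl(2(\maxval+1)\tw-(n_C-1)\bigr)$, which is strictly negative precisely when $n_C-1>2(\maxval+1)\tw$, i.e.\ when $n_C>2(\maxval+1)\tw+1$, as claimed.

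The only mild subtlety is the bound $|E(G[C])|\le \tw\cdot n_C$: this is the standard fact that a graph of treewidth $\tw$ on $N$ vertices has at most $\tw\cdot N-\binom{\tw+1}{2}\le \tw\cdot N$ edges, which follows by induction along a tree-decomposition (each forget/introduce step adds at most $\tw$ new edges incident to the introduced vertex). I do not expect any real obstacle here; the one thing to double-check is that we are allowed to use $s_a\le -1$ rather than $s_a\le 0$ for $a\ge 2$ — this is fine since the $s_a$ are integers and $s_2<0$ forces $s_2\le -1$, and non-increasingness propagates this to all larger distances (including the $-\infty$ convention for $a>\cdiam$, which only makes the bound stronger).
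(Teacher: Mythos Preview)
Your proof is correct and follows essentially the same approach as the paper: both bound the contribution of adjacent pairs by $\maxval$ and of non-adjacent pairs by $-1$ (using that $s_2<0$ and the $s_a$ are integers), then invoke the edge bound $|E(G[C])|\le \tw\cdot|C|$ to obtain the identical inequality $\sum_{i\in C}\util(i,C)\le |C|\bigl(2(\maxval+1)\tw-(|C|-1)\bigr)<0$. The only cosmetic difference is that the paper cites $\tw$-degeneracy for the edge bound while you argue it directly from a tree-decomposition, and you make the disconnected case and the $s_a\le -1$ step explicit where the paper leaves them implicit.
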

\begin{proof}
	Each agent adjacent to $i$ contributes $\maxval$ to $\util(i,C)$, whereas all the other agents contribute at most~$-1$.
	Since a  graph of treewidth $\tw$ is $\tw$-degenerate, there are $|E(G[C])| \le |C| \cdot \tw$ pairs of adjacent agents and $\binom{|C|}{2} - |E(G[C])|$ pairs of non-adjacent agents.
	We have
	\begin{align*}
		\sum_{i \in C}\util(i,C)
		&= \sum_{i,j \in C; i\neq j}\val\left(\dist(i,j)\right)\\
		&\le 2\left(\maxval\cdot \left|E\left(G[C]\right)\right| - \left(\binom{|C|}{2} - \left|E\left(G[C]\right)\right|\right)\right)\\
		&= 2\left((\maxval+1) \cdot  \left|E\left(G[C]\right)\right| - \binom{|C|}{2}\right)\\
		&\le 2(\maxval+1) \cdot |C| \cdot \tw - |C|(|C|-1)\\
		&=|C|\left(2(\maxval+1) \cdot \tw- (|C|-1)\right)\\
		&<|C|\left(2(\maxval+1) \cdot \tw -\left(2(\maxval+1) \cdot \tw+ 1-1\right)\right)=0. \qedhere
	\end{align*}
\end{proof}

\section{Computing Optimal Outcomes}
\label{sec:algo}

\subsection{Intractability}
As our first step towards an understanding of the complexity of computing a welfare-optimal outcome in an SDG, we establish the \NP-hardness of \DGWF, \DGIR\ and \DGNS\ even for a very simple choice of $\val$.

\begin{theorem}
	\label{thm:NPh}
	Let $\val=(s_1)$ for any $s_1>0$.
	Then \DGWF, \DGIR\ and \DGNS are \NPh.
\end{theorem}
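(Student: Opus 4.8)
The plan is to first observe that for $\val=(s_1)$ the three problems collapse into one, and then to reduce the \NP-hard \textsc{Cluster Deletion} problem to \DGWF. Since here $\cdiam=1$, we have $\val(d)=-\infty$ for every $d\ge 2$, so for a coalition $C$ the utility $\util(i,C)$ is finite for every $i\in C$ precisely when $G[C]$ is a clique; in that case $\util(i,C)=(|C|-1)s_1\ge 0$ and $C$ contributes $s_1|C|(|C|-1)=2s_1\binom{|C|}{2}$ to $\SW$. In particular, every outcome of finite (hence nonnegative) welfare is automatically individually rational. Moreover, every welfare-maximizing outcome is Nash stable: if an agent $i$ in its clique $\Pi_i$ with $|\Pi_i|=m$ could strictly improve by moving to some $C\in\Pi\cup\{\emptyset\}$, then $C\cup\{i\}$ must be a clique with $|C|\ge m$ (moving to $C=\emptyset$ is never improving, as $(m-1)s_1\ge 0$), and carrying out this move changes $\SW$ by exactly $2s_1(|C|-m+1)\ge 2s_1>0$, contradicting maximality. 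Hence the maximum welfare over all outcomes, over all individually rational outcomes, and over all Nash-stable outcomes coincide, and this common value equals $2s_1$ times the maximum, over all partitions $\mathcal{C}$ of the agent set into cliques of $G$, of $\sum_{C\in\mathcal{C}}\binom{|C|}{2}$. Consequently \DGWF, \DGIR, and \DGNS\ are equivalent as decision problems when $\val=(s_1)$, so it suffices to prove that \DGWF\ is \NP-hard.

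For this I would reduce from \textsc{Cluster Deletion}: given a graph $H$ and $k\in\N$, can $H$ be turned into a disjoint union of cliques by deleting at most $k$ edges? This problem is \NP-hard (Shamir, Sharan, and Tsur). Given an instance $(H,k)$, if $k\ge|E(H)|$ I output a fixed positive instance of \DGWF\ (for instance, the game on $K_2$ with target $1$); otherwise I output the social network $G:=H$ and the target $b:=2s_1\cdot(|E(H)|-k)$, which is a positive integer since then $|E(H)|-k\ge 1$. The reduction rests on the standard correspondence between edge-deletion sets and clique partitions: removing a set $F\subseteq E(H)$ makes $H$ a cluster graph if and only if $F$ is the set of edges running between the parts of some partition $C_1,\dots,C_\ell$ of $V(H)$ into cliques of $H$ (namely the vertex sets of the components of $H-F$), and then $|F|=|E(H)|-\sum_i\binom{|C_i|}{2}$. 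Thus $H$ admits a deletion set of size at most $k$ iff $V(H)$ can be partitioned into cliques with $\sum_i\binom{|C_i|}{2}\ge|E(H)|-k$, which by the previous paragraph holds iff the distance game given by $G$ and $\val$ admits an outcome --- equivalently, an individually rational one, equivalently a Nash-stable one --- of social welfare at least $b$. This establishes \NP-hardness of all three problems.

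The step I expect to be the main obstacle is the collapse argument rather than the reduction itself: in view of \Cref{lem:irrational,lem:non-nash} it is not a priori clear that individual rationality and Nash stability cost nothing for $\val=(s_1)$, and a clean proof hinges on the clique characterization of finite-utility coalitions together with the exact welfare difference $2s_1(|C|-m+1)$ produced by any single-agent move, which forces every profitable deviation to strictly increase the total welfare. Two minor points to keep track of are the degenerate case $|E(H)|\le k$ (absorbed into a trivial instance, which also guarantees $b\in\N$) and --- if one insists that the social network be connected, as in the Preliminaries --- the fact that \textsc{Cluster Deletion} remains \NP-hard on connected graphs.
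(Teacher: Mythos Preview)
Your argument is correct and follows a different route from the paper. Both proofs rest on the same first observation---that for $\val=(s_1)$ an outcome has finite welfare iff every coalition is a clique, with total welfare $2\maxval\sum_C\binom{|C|}{2}$---but diverge from there. You reduce from \textsc{Cluster Deletion}, which is already phrased in the language of clique partitions, so the equivalence $\sum_C\binom{|C|}{2}\ge |E(H)|-k$ falls out immediately; you then handle individual rationality and Nash stability once and for all by the clean general lemma that every welfare-maximising clique partition is Nash stable (any profitable single-agent move raises $\SW$ by $2\maxval(|C|-m+1)>0$). The paper instead goes through an intermediate problem, \textsc{3-Coloring Triangle Covered Graph} (3-colourability restricted to graphs admitting a perfect triangle packing), shown \NPh\ from \textsc{NotAllEqual-3-SAT}, and then passes to the complement $\overline{G}$ with target $b=3n\maxval(n-1)$: the triangle cover forces balanced colour classes, so a 3-colouring exists iff $\overline{G}$ partitions into three cliques of size $n$, and the triangle structure also guarantees that in such a partition no agent can join another clique (each agent is non-adjacent in $\overline{G}$ to its two triangle-mates, one in each other class), giving Nash stability of the specific witness. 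Your route is shorter and uses an off-the-shelf hardness result; the paper's construction is more bespoke but makes the stability of the target outcome visible from the instance structure rather than via a separate optimality argument.
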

\begin{proof}[Proof Sketch]
	As our first step, we prove the \NP-hardness of the intermediate problem called \textsc{3-Colo\-ring Triangle Covered Graph  (3CTCG)} via an adaptation of a known reduction from \textsc{Not\-All\-Equal-3-SAT}~\cite[Theorem 9.8]{Papadimi94}:

	\begin{center}
		\begin{boxedminipage}{0.98 \columnwidth}
			\textsc{3-Coloring Triangle Covered Graph  (3CTCG)}\\[2pt]
			\begin{tabular}{l p{0.78 \columnwidth}}
				Input: & An undirected graph $G=(V,E)$ with $|V|=3n$ vertices such that $G$ contains a collection of $n$ mutually vertex disjoint triangles.\\
				Question: \hspace{-0.4cm} & Does $G$ have a 3-coloring?
			\end{tabular}
		\end{boxedminipage}
	\end{center}

	Next, we reduce \textsc{3CTCG} to our three problems via a single construction. Let $G$ be an instance of \textsc{3CTCG} with $3n$ vertices and $T_1, \ldots, T_n$ the corresponding collection of triangles.
	Let $\overline{G}$ be a complement of $G$, let $\maxval=\maxval(\val)$ and let $b=3n\maxval\cdot(n-1)$.
	To establish the \NP-hardness of \DGWF, it suffices to show that $G$ is a \YesI{} of \textsc{3CTCG} if and only if $\overline{G}$ admits an outcome with social welfare at least $b$; for the remaining two problems, we additionally show that such an outcome will furthermore be individually rational and Nash stable.
\end{proof}

\subsection{An Algorithm for Tree-Like Networks}
We complement Theorem~\ref{thm:NPh} by establishing that all three problems under consideration can be solved in polynomial time on networks of bounded treewidth---in other words, we show that they are \XP-tractable w.r.t.\ treewidth.
We first describe the ``baseline'' algorithm for solving \DGWF, and then prove that this may be adapted to also solve the other two problems by expanding on its records and procedures (see the appendix).

\begin{theorem}
	\label{thm:tw}
	For every fixed scoring vector~$\val$, the \DGWF, \DGIR, and \DGNS problems are in~\XP when parameterized by the treewidth of the social network~$G$.
\end{theorem}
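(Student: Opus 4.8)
The plan is to perform leaf-to-root dynamic programming along a nice tree-decomposition $(\mathcal{T},\beta)$ of $G$ of width $\tw$. The central difficulty is that the utility of an agent depends on distances \emph{within its coalition}, and these distances may be realized by shortest paths that leave $G^x$ and come back; so unlike ordinary connectivity-type DPs, we cannot simply record which vertices of the current bag lie in which coalition. To cope with this, the DP record at a node $x$ will store, for the coalitions that currently intersect $\beta(x)$: (i) a partition of $\beta(x)$ into ``coalition fragments'' together with, for each fragment, a guess of the \emph{final} size of that coalition in the whole instance; (ii) for every pair of bag-vertices in the same fragment, the \emph{final} distance between them in the completed coalition (a value in $[\cdiam]$, since by the small-world observation any welfare-relevant coalition has diameter at most $\cdiam$); and (iii) for every bag-vertex $v$ and every value $d\in[\cdiam]$, the number of coalition-members already introduced in $G^x$ that are at distance exactly $d$ from $v$ in the final coalition. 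Additionally the record stores the total social welfare already ``locked in'' from vertices that have been forgotten. Because $\cdiam$ and $\tw$ are constants / the parameter, the number of records at each node is $n^{\Oh{\tw}}$.

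The key steps, in order, are as follows. First, compute an optimal nice tree-decomposition (\Cref{prop:IQPisFPT}-style preprocessing is not needed; standard algorithms suffice) and the bound $\cdiam$ on coalition diameters. Second, define the DP table and a notion of \emph{valid} record at $x$: a record is valid if there exists an extension of the partial coalition structure on $G^x$ to a full outcome on $G$ consistent with all the guessed sizes and distances, such that the distances guessed for bag-pairs are actually the true $G[\Pi_i]$-distances and the per-vertex distance-count vectors are correct. Third, handle the four node types. Leaf nodes have the empty record. \emph{Introduce} nodes are the hard case: when $v$ enters the bag we must branch over which fragment $v$ joins (or whether it starts a new fragment), over the guessed final size of that coalition, and -- crucially -- over the distance profile of $v$ to every other bag-vertex and to the already-counted forgotten members. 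The nontrivial part is checking \emph{consistency}: the newly added edges incident to $v$ inside $\beta(x)$ may create shortcuts, so we must verify the triangle-inequality-type constraints between $v$'s guessed distances and the stored pairwise distances, and update every affected distance-count vector; this is the ``advanced branching step'' the introduction promised. \emph{Forget} nodes: when $v$ leaves the bag, we must be sure its utility is final, i.e., that no future vertex can be at distance $<$ (currently recorded) from $v$ in $v$'s coalition -- this is where the stored per-vertex distance-count vectors, combined with the guessed final coalition size, let us detect whether the recorded distances for $v$ are already ``saturated''; if the counts don't yet sum to (guessed size $-1$) we keep a slot open, otherwise we add $\util(v,\cdot)=\sum_d (\text{count}_d)\cdot\val(d)$ to the locked-in welfare and drop $v$'s rows/columns. \emph{Join} nodes: combine two child records over the same bag by requiring identical fragment partitions and identical guessed sizes, and \emph{adding} the per-vertex forgotten-member distance counts (taking care not to double-count bag-vertices), while keeping the bag-pair distances equal; the guessed final distances must agree. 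Fourth, at the root (empty bag) read off the maximum locked-in welfare over all valid records; accept iff it is at least $b$. For \DGIR\ and \DGNS\ we thread an extra flag through the records: for IR, when a vertex is forgotten with finalized utility we reject the record if that utility is negative; for NS, we additionally record for each forgotten vertex enough information (its coalition membership and, via the counts, its utility, plus for every \emph{other} coalition present a summary of distances from that vertex) to verify at forget-time that the vertex has no improving deviation -- since a coalition relevant for a deviation again has diameter $\le\cdiam$, a bounded-size summary suffices.

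I would prove correctness by the standard two-way induction over $\mathcal{T}$: every valid outcome of $G$ induces, by restriction to $G^x$, a valid record at every node with the claimed locked-in welfare, and conversely every valid record can be extended to an outcome achieving at least that welfare. The running time is $n^{f(\tw,\cdiam)}$ because there are $n^{\Oh{\tw}}$ choices for fragment sizes and per-vertex count vectors (each count is in $[0,n]$, and there are $\Oh{\tw\cdot\cdiam}$ of them), $\tw^{\Oh{\tw}}$ fragment partitions, and $(\cdiam+1)^{\Oh{\tw^2}}$ distance profiles on the bag, all constants in the parameter except the polynomial-in-$n$ counts; processing each node costs polynomial time, and $\mathcal{T}$ has $\Oh{\tw\cdot n}$ nodes. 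Since $\cdiam$ is a constant of the fixed vector $\val$, this is an \XP\ algorithm parameterized by $\tw$.

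The main obstacle I anticipate is the correctness of the \emph{introduce} and \emph{forget} steps together: I must argue that the guessed pairwise distances on the bag, plus the per-vertex distance-count vectors, form a record rich enough that (a) no inconsistent guess survives -- every stored distance is eventually witnessed by an actual path that stays inside the final coalition, which requires showing that any ``real'' shortest path between two bag-vertices can be charged to the decomposition so that each of its internal detours is certified by the counts at the appropriate forget node -- and (b) no valid configuration is lost, i.e., the branching at introduce nodes is exhaustive over distance profiles. This is delicate precisely because shortest paths inside a coalition are not controlled by the tree-decomposition of $G$; the resolution is to let the record commit, once and for all at the moment two vertices share a fragment, to their final distance, and then merely \emph{verify} along the DP that newly introduced edges do not violate this commitment and that enough same-coalition vertices at each distance are eventually supplied. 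Getting the bookkeeping of these verifications exactly right -- especially at join nodes, where forgotten-member counts from the two branches must be merged without double-counting the shared bag -- is where the real work lies.
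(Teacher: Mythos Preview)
Your high-level plan---leaf-to-root dynamic programming over a nice tree-decomposition, partitioning bag vertices by coalition and carrying distance information for forgotten vertices---matches the paper's approach, and threading extra checks through the records for IR/NS is also what the paper does. However, there is a genuine gap in your record design that makes the DP incorrect as stated.

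The problem is item~(iii). You store, for each bag vertex $b$ and each $d\in[\cdiam]$, the \emph{number} of already-introduced coalition-mates at final distance exactly~$d$ from~$b$; these are \emph{marginal} counts. The paper instead stores, for every \emph{distance vector} $\vec{d}\in[\cdiam]^{|\beta(x)\cap P|}$ (one coordinate per bag vertex in the coalition~$P$), the number of forgotten agents with exactly that vector. The joint information is strictly necessary. Suppose the bag contains $b_1,b_2$ and two coalition-mates have been forgotten. The marginals ``one agent at distance~$1$ and one at distance~$3$ from each of $b_1,b_2$'' are compatible with both the joint profile $\{(1,3),(3,1)\}$ and the profile $\{(1,1),(3,3)\}$. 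Now introduce a vertex $z$ with final distances $2,2$ to $b_1,b_2$; since $\beta(x)$ separates $z$ from each forgotten $w$, one has $\dist_C(z,w)=\min_i\big(\dist_C(z,b_i)+\dist_C(b_i,w)\big)$. The first profile yields distances $3,3$ to $z$, the second yields $3,5$, and whenever $\val(3)\neq\val(5)$ the two partial solutions have different optimal completions despite identical records---so the DP is ill-defined on your signatures. The same issue breaks your forget-node logic and your ``distance profile of $v$ to the already-counted forgotten members'' branching at introduce nodes: the marginals neither determine a forgotten agent's utility nor let you consistently extend the counts when a new bag vertex appears. Repairing this by storing full distance-vector counts raises the table size from your claimed $n^{\Oh{\tw}}$ to $n^{\cdiam^{\Oh{\tw}}}=n^{2^{\Oh{\tw}}}$, which is exactly what the paper obtains.

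A secondary divergence concerns item~(ii): the paper stores, for each same-coalition bag pair, an explicit shortest \emph{path} (a sequence of at most $\cdiam{+}1$ named vertices of $G$, contributing a factor $n^{\Oh{\tw^2}}$), not merely the distance. Guessing the concrete path gives a direct mechanism for certifying that each committed distance is eventually realized inside the coalition; your sketch of how the counts alone would ``witness'' the committed distances is not worked out and, without the path guess, it is not clear how you rule out records that commit to a distance never achieved by any intra-coalition path.
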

\begin{proof}[Proof Sketch]

	\newcommand{\Pttn}{\ensuremath{C}} 
	\newcommand{\len}{\ensuremath{\operatorname{len}}}
	\newcommand{\Spaths}{\ensuremath{S}} 
	\newcommand{\Dvec}{\ensuremath{T}} 
	Our algorithm is based on leaf-to-root dynamic programming along a nice tree-de\-com\-po\-si\-ti\-on of the input social network with rather complicated structure. In each node $x$ of the tree-decomposition, we store a set $\mathcal{R}_x$ of partial solutions called \emph{records}. Each record realizes a single \emph{signature} which is a triple~$(\Pttn,\Spaths,\Dvec)$, where
	\begin{itemize}
		\item $\Pttn$ is a partition of bag agents into parts of coalitions; there are at most $\tw+1$ different coalitions intersecting $\beta(x)$ and, thus, at most ${tw^\Oh{\tw}}$ possible partitions of $\beta(x)$.
		\item $\Spaths$ is a function assigning each pair of agents that are part of the same coalition according to $\Pttn$ the shortest intra-coalitional path; recall that for fixed \val, the diameter of every coalition is bounded by a constant~$\cdiam$ and, therefore, there are~${n^\Oh{\cdiam} = n^\Oh{1}}$ possible paths for each pair of agents which gives us~${n^\Oh{\tw^2}}$ combinations in total.
		\item $\Dvec$ is a table storing for every coalition $P$ and every possible vector of distances to bag agents that are in $P$ the number of agents from $P$ that were already forgotten in some node of the tree-decomposition; the number of possible coalitions is at most $\tw+1$, the number of potential distance vectors is $\cdiam^{\tw+1} = 2^\Oh{\tw}$, and there are at most $n$ values for every combination of coalition and distance vector which leads to at most~${n^{2^\Oh{\tw}}}$ different tables~$\Dvec$.
	\end{itemize}
	The value of every record is a pair $(\pi,w)$, where $\pi$ is a partition of~$V^x$ such that $\SW(\pi) = w$ and $\pi$ witnesses that there is a partition of $V^x$ corresponding to the signature of the record, as described above. We store only one record for every signature -- the one with the highest social welfare. Therefore, in every node $x$, there are at most $n^{2^\Oh{\tw}}$ different records.

	Once the computation ends, we check the record in the root node~$r$ and based on the value of $w$, we return the answer; \Yes\ if $w\geq b$ and \No\ otherwise. Moreover, as $G^r=G$, the partition $\pi$ is also an outcome admitting social-welfare~$w$.
\end{proof}

\subsection{Fixed-Parameter Tractability}
A natural follow-up question to Theorem~\ref{thm:tw} 
is whether one can improve these results to fixed-parameter algorithms. As our final contribution, we show that this is possible at least when dealing with simple scoring vectors, or on networks with stronger structural restrictions. To obtain both of these results, we first show that to obtain fixed-parameter tractability it suffices to have a bound on the size of the largest coalition in a solution (i.e., a welfare-optimal outcome).

\begin{theorem}
	\label{thm:tw+coal-sz}
	For every fixed scoring vector~$\val$, the variants of \DGWF, \DGIR, \DGNS  where we only consider outcomes consisting of coalitions of at most a prescribed size are \FPT parameterized by the treewidth of the network and the maximum coalition size combined.
\end{theorem}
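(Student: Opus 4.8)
The plan is to refine the leaf-to-root dynamic program that proves \Cref{thm:tw} so that the two record-components responsible for its $n$-dependence---the stored shortest intra-coalitional paths and the table counting forgotten agents by their distance vectors---are replaced by a record that is \emph{exact} but whose size is bounded purely in the two parameters. Fix the scoring vector $\val$ (so $\cdiam=|\val|$ is a constant) and let $p$ be the prescribed upper bound on coalition sizes. The point is that once every coalition has at most $p$ agents, we can afford to remember, for each coalition that currently intersects the bag $\beta(x)$, the \emph{entire subgraph that this coalition induces on the agents processed so far}, stored as an abstract graph on at most $p$ vertices together with a labelling marking which of its vertices are the current bag agents of that coalition. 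Crucially this record never needs to be revised: since (by the tree-decomposition axioms) a newly introduced agent has no neighbour among already-forgotten agents, the induced subgraph of a coalition only grows; and once a coalition no longer meets the bag it---by connectivity of the subtree of nodes whose bag it meets---never does so again, so the subgraph recorded for it is final.

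\textbf{Records and transitions.} A \emph{signature} at a node $x$ consists of a partition of $\beta(x)$ into coalition pieces and, for each piece $Q$, an abstract graph $H$ on at most $p$ vertices with an injection identifying $Q$ with marked vertices of $H$ (the intended meaning being $H\cong G^x[C\cap V^x]$ for the coalition $C$ with $C\cap\beta(x)=Q$, the unmarked vertices being the forgotten members of $C$). As $|\beta(x)|\le\tw+1$ and $|V(H)|\le p$, the number of signatures per node is $\tw^{\mathcal{O}(\tw)}\cdot 2^{\mathcal{O}(p^2\tw)}=f(\tw,p)$; as usual we keep, per signature, a single record of maximum \emph{value}, where the value is the largest total welfare of those coalitions already entirely contained in $V^x$. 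The transitions are then straightforward: at an introduce node for $v$ we branch over which piece $v$ joins (or let $v$ open a fresh singleton piece) and enlarge the corresponding $H$ by $v$ and by the edges of $G$ from $v$ to that piece's current bag agents---these being all edges of $G^x[C\cap V^x]$ incident to $v$---discarding the record if the piece exceeds $p$ vertices; at a forget node for $c$ we re-mark $c$ as non-bag and, if its piece thereby becomes bag-agent-free, finalise that coalition by computing $\sum_{i\in C}\util^{\val}(i,C)$ from $H$ via breadth-first search (a disconnected $H$ or diameter above $\cdiam$ contributing $-\infty$), adding the result to the value and dropping the piece; at a join node we combine pairs of child records sharing the same partition of $\beta(x)$, gluing, for each piece, the two child graphs along their common marked vertices (discarding if the glued graph has more than $p$ vertices) and summing the finalised-coalition values.

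\textbf{Correctness and running time.} At the root the bag is empty, so every coalition has been finalised and the value of the unique record equals the optimal social welfare; we answer \Yes{} iff it is at least $b$. The only nontrivial points are that (i) keeping only the best value per signature is sound---a finalised coalition receives no further agents, so its welfare is separable from everything above $x$, while a not-yet-finalised coalition develops above $x$ only as a function of its recorded graph $H$, which is part of the signature---and (ii) gluing at a join node faithfully reproduces $G^x[C\cap V^x]$, which follows from the standard facts that for a join node with children $y,z$ one has $V^y\cap V^z=\beta(x)$ and that $G$ has no edge between $V^y\setminus\beta(x)$ and $V^z\setminus\beta(x)$, so the two sides overlap precisely in their marked vertices and contribute no cross edges. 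With $\mathcal{O}(\tw\cdot n)$ nodes in the (efficiently computable) nice tree-decomposition, $f(\tw,p)$ records per node, and $f(\tw,p)\cdot\mathrm{poly}(\tw,p)$ time per transition (an extra $f(\tw,p)$ factor at join nodes for pairing records), the total running time is $f(\tw,p)\cdot n^{\mathcal{O}(1)}$, i.e.\ fixed-parameter tractable.

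\textbf{The stable variants, and the main obstacle.} For \DGIR{} we additionally demand, when finalising $C$, that $\util^{\val}(i,C)\ge 0$ for every $i\in C$, setting the value to $-\infty$ otherwise. For \DGNS{} we augment the records exactly as in the Nash-stable version of \Cref{thm:tw}: since an NS-deviation of an agent into a different coalition is inherently non-local, besides the structural record we must retain, for already-forgotten agents and for finalised coalitions, enough data to later certify that no agent strictly prefers switching---and because any coalition an agent could move to has at most $p+1$ members afterwards, this extra data again has size $f(\tw,p)$. Making this last step precise---checking the non-local Nash-stability condition inside the dynamic program without inflating the record size---is the part I expect to be the main obstacle; for \DGWF{} and \DGIR{} the argument above is a direct, if technical, strengthening of \Cref{thm:tw}.
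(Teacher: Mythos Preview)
Your proposal is correct and takes essentially the same approach as the paper: both replace the $n$-dependent records of \Cref{thm:tw} by storing, for every coalition meeting the bag, the full induced-subgraph topology on its already-processed (at most $p$) agents together with which of them are bag vertices, and both finalise a coalition's welfare once it leaves the bag. Your write-up is in fact more explicit about the introduce/forget/join transitions and about why gluing at a join is faithful; the paper's sketch packages the same idea as a single topology on at most $\sz\cdot\tw$ agents rather than one graph per piece, but this is only a cosmetic difference.
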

\begin{proof}[Proof Sketch]
	Similar to the previous ones, we design a dynamic programming (DP) on a nice tree decomposition, albeit the procedure and records are completely different.

	Given a subset of agents $X \subseteq N$, let $\Pi = (\pi_1,\pi_2, \dots, \pi_\ell)$ be a partition of a set containing $X$ and some ``anonymous'' agents. We use \emph{$\topo(\Pi)$} to denote a set of graph topologies on $\pi_1, \pi_2, \dots, \pi_\ell$ given $X$. That is, $\topo(\Pi) = \{ \topo(\pi_1), \dots , \topo(\pi_\ell)\}$ where $\topo(\pi_i)$ is some graph on $|\pi_i|$ agents, namely $\pi_i \cap X$ and $|\pi_i \setminus X|$ ``anonymous'' agents, for each $i \in [\ell]$.
	The maximum coalition size of any welfare maximizing partition is denoted by $\sz$.
	Table, {\sf M}, contains an entry {\sf M}$[x, C, \topo(\Pi)]$ for every node $x$ of the tree decomposition,  each partition $C$ of $\beta(x)$, and each set of graph topologies $\topo(\Pi)$ given $\beta(x)$ where $\Pi$ is a partition of at most $\sz\cdot\tw$ agents. An entry of {\sf M} stores the maximum welfare in $G^x$ under the condition that the partition into coalitions satisfies the following properties.
	Recall that for a partition $P$ of agents and an agent $a$, we use $P_a$ to denote the coalition agent $a$ is part of in $P$.
	\begin{enumerate}
		\item \emph{$C$ and $\Pi$ are consistent}, i.e., the partition of the bag agents $\beta(x)$ in $G^x$ is denoted by $C$ and $C_a = \Pi_a \cap \beta(x)$ for each agent $a \in \beta(x)$.
		\item The coalition of agent $a \!\in \!\beta(x)$ in the graph $G^x$ is $\Pi_a$.
		\item \emph{$\topo(\Pi)$ is consistent with $G^x$} i.e., the subgraph of $G^x$ induced on the agents in coalition of $a$ is $\topo(\Pi_a)$, i.e., $G^x[\Pi_a] = \topo(\Pi_a)$.
	\end{enumerate}

	Observe that we do not store $\Pi$. We only store the topology of $\Pi$ which is a graph on at most $\sz\cdot \tw$ agents.

	We say an entry of {\sf M}$[x,C, \topo(\Pi)]$ is \emph{valid} if it holds that
	\begin{enumerate}
		\item \emph{$C$ and $\Pi$ are consistent}, i.e., $C_a = \Pi_a \cap \beta(x)$ for each agent $a\in \beta(x)$,
		\item Either $C_a = C_b$, or $C_a \cap C_b = \emptyset$ for each pair of agents $a,b \in \beta(x)$,
		\item \emph{$\topo(\Pi)$ is consistent with $G^x$ in $\beta(x)$}, i.e., for each pair of agents $a,b \in \beta(x)$ such that $\Pi_{a} = \Pi_{b}$, there is an edge $(a,b) \in \topo(\Pi_{a})$ if and only if $(a,b)$ is an edge in $G^x$.
	\end{enumerate}

	Once the table is computed correctly, the solution is given by the value stored in  {\sf M}$[r,C, \topo(\Pi)]$ where~$C$ is empty partition and $\topo(\Pi)$ is empty. Roughly speaking, the basis corresponds to leaves (whose bags are empty), and are initialized to store $0$. For each entry that is not valid we store $-\infty$. To complete the proof, it now suffices to describe the computation of the records at each of the three non-trivial types of nodes in the decomposition and prove correctness.
\end{proof}

Similarly to \Cref{thm:tw}, we design a dynamic programming on a nice tree decomposition, albeit the procedure and records are completely different.

From \Cref{lem:degen-coal-size} it follows that  if $s_2 < 0$ and $\tw(G)$ is bounded, then the maximum coalition size  of a welfare maximizing outcome is bounded. Hence, using Theorem~\ref{thm:tw+coal-sz}  we get the following.
\begin{corollary}
	\DGNS, \DGIR, and \DGWF are fixed-parameter tractable parameterized by the treewidth $\tw(G)$ if $s_2 < 0$.
\end{corollary}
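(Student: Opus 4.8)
The plan is to reduce the three unrestricted problems to the bounded-coalition-size variants solved by \Cref{thm:tw+coal-sz}, using \Cref{lem:degen-coal-size} to supply the required size bound. Put $k:=2(\maxval+1)\cdot\tw(G)+1$; since the scoring vector $\val$ is fixed, $\maxval$ is a constant and hence $k$ is a function of $\tw(G)$ alone. Because $s_2<0$, \Cref{lem:degen-coal-size} applies and tells us that every coalition $C$ with $|C|>k$ satisfies $\sum_{i\in C}\util(i,C)<0$; contrapositively, any coalition whose agents have non-negative total utility has size at most $k$.

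Next I would show that for each of the three problems it suffices to optimise over outcomes all of whose coalitions have size at most $k$. For \DGWF\ this is an exchange argument: in a welfare-maximum outcome a coalition $C$ with $|C|>k$ could be replaced by the singletons $\{i\}$ for $i\in C$, which raises the total contribution of these agents from the negative value $\sum_{i\in C}\util(i,C)$ to $0$ and leaves every other agent's utility unchanged, strictly increasing $\SW$ — so no welfare-maximum outcome can contain such a coalition. For \DGIR, individual rationality forces $\util(i,\Pi_i)\ge 0$ for all $i$, hence $\sum_{i\in C}\util(i,C)\ge 0$ for every coalition $C$ of the outcome, and the contrapositive above gives $|C|\le k$; thus \emph{every} individually rational outcome already respects the bound. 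For \DGNS\ the same conclusion holds, since Nash stability implies individual rationality (an agent with negative utility would profit from deviating to a singleton coalition). In all three cases the optimum of the problem therefore coincides with the optimum of its bounded-size variant with prescribed maximum coalition size $k$.

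Finally I would invoke \Cref{thm:tw+coal-sz} on these variants: it runs in time $f\bigl(\tw(G)+k\bigr)\cdot n^{\Oh{1}}$ for some computable $f$, and since $k\le 2(\maxval+1)\cdot\tw(G)+1$ with $\maxval$ constant, this is $g(\tw(G))\cdot n^{\Oh{1}}$ for a computable $g$, i.e.\ a fixed-parameter algorithm parameterised by $\tw(G)$ that decides the original problems \DGWF, \DGIR\ and \DGNS.

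I do not expect a real obstacle, as this is essentially the argument sketched just before the statement. The only point requiring a moment of care is the equivalence between each problem and its bounded-size variant: for \DGWF\ it rests on the exchange argument, and for \DGIR\ and \DGNS\ on the observation that feasibility itself already caps every coalition at size $k$, so the two problems have literally the same set of feasible outcomes. Granting \Cref{lem:degen-coal-size} and \Cref{thm:tw+coal-sz}, nothing further is needed.
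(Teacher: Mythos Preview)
Your proposal is correct and follows exactly the approach the paper sketches in the sentence preceding the corollary: combine \Cref{lem:degen-coal-size} with \Cref{thm:tw+coal-sz}. In fact you supply more detail than the paper, which only explicitly mentions that welfare-maximizing outcomes have bounded coalition size; your separate treatment of \DGIR\ and \DGNS\ via the feasibility observation (every individually rational outcome already obeys the size bound, and Nash stability implies individual rationality) is a clean way to cover those two cases.
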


Turning back to general scoring vectors, we recall that \Cref{lem:maxdeg-coal-size} provided a bound on the size of the coalitions in a welfare-optimal outcome in terms of the maximum degree $\Delta(G)$ of the network $G$. Applying Theorem~\ref{thm:tw+coal-sz} again yields:

\begin{corollary}
	\DGNS, \DGIR, and \DGWF are fixed-parameter tractable parameterized by the treewidth $\tw(G)$ and the maximum degree $\Delta(G)$ of the social network.
\end{corollary}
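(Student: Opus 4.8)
The plan is to combine \Cref{lem:maxdeg-coal-size} with \Cref{thm:tw+coal-sz}, so that the corollary becomes essentially a black-box consequence of these two results. Fix the scoring vector $\val=(s_1,\dots,s_\cdiam)$. By the remark in the preliminaries we may assume $s_1>0$, as otherwise all three problems are solved trivially by the all-singletons outcome. I would then set $\sz := (s_1+1)\cdot \Delta(G)\cdot(\Delta(G)-1)^{\cdiam-1}$ to be the bound of \Cref{lem:maxdeg-coal-size}, and observe that, since $s_1$ and $\cdiam$ are constants determined by the fixed vector $\val$, the quantity $\sz$ is bounded by a computable function of $\Delta(G)$ alone.

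The first step is to show that, for each of the three problems, it suffices to consider only outcomes in which every coalition has size at most $\sz$. For \DGIR\ and \DGNS\ this should be immediate from \Cref{lem:maxdeg-coal-size}: if an outcome contains a coalition $C$ with $|C|>\sz$, then \emph{every} agent $i\in C$ has $\util(i,C)<0$, so each of them admits an IR-deviation (and also an NS-deviation, to the empty coalition); hence such an outcome is neither individually rational nor Nash stable, and therefore every individually rational (resp.\ Nash stable) outcome automatically obeys the size bound $\sz$. For \DGWF\ I would instead argue by a local-improvement step: given any outcome of social welfare at least $b$, as long as some coalition $C$ has $|C|>\sz$, split $C$ into $|C|$ singleton coalitions; since $\sum_{i\in C}\util(i,C)<0$ by \Cref{lem:maxdeg-coal-size}, while splitting raises the utility of each member of $C$ to $0$ and leaves the utilities of all other agents unchanged (an agent's utility depends only on its own coalition), the social welfare does not decrease. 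After finitely many such steps all coalitions have size at most $\sz$, so on every instance the answer to each of the three problems coincides with the answer to its variant restricted to outcomes with coalitions of size at most $\sz$.

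The final step is to invoke \Cref{thm:tw+coal-sz} with the prescribed maximum coalition size set to $\sz$, which yields for each of the three problems an algorithm running in time $f(\tw(G),\sz)\cdot n^{\Oh{1}}$ for a computable function $f$. Since $\sz$ is bounded by a computable function of $\Delta(G)$, this running time is bounded by $h(\tw(G),\Delta(G))\cdot n^{\Oh{1}}$ for a computable function $h$, establishing fixed-parameter tractability with respect to $\tw(G)+\Delta(G)$. I expect the only genuinely delicate point to be the soundness of the size restriction for the stability variants, i.e., that we do not discard a Nash-stable or individually rational witness of welfare at least $b$; but this is handed to us precisely by the fact that \Cref{lem:maxdeg-coal-size} makes \emph{all} agents in an oversized coalition unhappy rather than merely one, so there is no real obstacle here, and the remaining pieces are routine.
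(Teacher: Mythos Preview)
Your proposal is correct and follows exactly the same approach as the paper: combine \Cref{lem:maxdeg-coal-size} with \Cref{thm:tw+coal-sz}. The paper states this in a single sentence, while you spell out the (routine) argument for why the coalition-size restriction is lossless for each of the three problem variants; your treatment is more detailed but not different in substance.
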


As our final contribution, we provide fixed-parameter algorithms for computing welfare-optimal outcomes that can also deal with networks containing high-degree agents. To do so, we exploit a different structural parameter than the treewidth---namely the vertex cover number of $G$ ($\vc(G)$). We note that while the vertex cover number is a significantly more ``restrictive'' graph parameter than treewidth, it has found numerous applications in the design of efficient algorithms in coalition formation, including for other types of coalition games~
\cite{BiloFMM18,BodlaenderHJOOZ20,HanakaL22}.

\begin{theorem}
	\label{thm:WfIsNsFPTwrtVC}
	\DGNS, \DGIR, and \DGWF are fixed-parameter tractable parameterized by the vertex cover number $\vc(G)$ of the social network.
\end{theorem}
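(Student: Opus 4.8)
The plan is to reduce each of the three problems to an instance of \textsc{Integer Quadratic Programming} of bounded dimension and bounded coefficient norms, and then invoke \Cref{prop:IQPisFPT}. Fix a minimum vertex cover $U$ of $G$ with $|U|=\vc(G)=k$, and let $I=N\setminus U$ be the independent set of the remaining agents. The key structural observation is that every agent in $I$ is completely described, as far as the game is concerned, by its neighbourhood inside $U$: there are at most $2^k$ such \emph{types}, and agents of the same type are interchangeable. Moreover, in a welfare-maximising (respectively IR, NS) outcome the distance between any two agents of a connected coalition is at most $\cdiam$, so the only information we need about a coalition is (a)~which vertices of $U$ it contains and how they are interconnected inside the coalition, and (b)~how many agents of each type it contains. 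Since a coalition that contains no vertex of $U$ is either a singleton or a pair of non-adjacent $I$-agents (because $I$ is independent), and such coalitions contribute nothing positive once $s_1>0$ unless the pair is adjacent---which it is not---we may assume every nontrivial coalition meets $U$. Hence there are at most $k$ nontrivial coalitions, and each is indexed by a subset of $U$.

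Concretely, I would first enumerate, by branching, a \emph{coalition sketch}: a partition of $U$ into at most $k$ blocks $U_1,\dots,U_\ell$ (the $U$-parts of the coalitions), together with, for each block, the full induced graph on $U_j$ \emph{inside} its coalition (a subgraph of $G[U_j]$)---there are at most $k^{k}\cdot 2^{\binom{k}{2}}=2^{\Oh{k^2}}$ such sketches. Fixing a sketch, the remaining freedom is: for each type $\tau\in\{0,1\}^k$ and each block $j$, an integer variable $x_{\tau,j}\ge 0$ counting how many type-$\tau$ agents join coalition $j$, plus a variable $x_{\tau,0}$ for those left as singletons; these satisfy the linear constraints $\sum_{j} x_{\tau,j} + x_{\tau,0} = n_\tau$ where $n_\tau$ is the number of agents of type $\tau$. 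Crucially, once the sketch and the multiplicities are fixed, \emph{all pairwise distances inside each coalition are determined}: the distance between two $U$-vertices is read off the fixed induced subgraph; the distance from a type-$\tau$ agent to a $U$-vertex $u$ is $1+\min_{u'\in N(\tau)} \dist(u',u)$ within that subgraph (or $1$ if $u\in N(\tau)$); and the distance between two $I$-agents of types $\tau,\tau'$ in the same coalition is $\min\{\,\dist(u,u')+2 : u\in N(\tau), u'\in N(\tau')\,\}$ (or via a common neighbour, giving $2$). Therefore the utility contributions can be written exactly: the $U$-to-$U$ and $U$-to-$I$ contributions are \emph{linear} in the $x$'s, while the $I$-to-$I$ contributions within coalition $j$ are $\sum_{\tau,\tau'} \val(d^{(j)}_{\tau,\tau'})\cdot(\text{number of such pairs})$, which is a fixed integer combination of $x_{\tau,j}x_{\tau',j}$ and $\binom{x_{\tau,j}}{2}$---i.e.\ a quadratic form with entries bounded by $|\val|_\infty$. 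Summing over blocks yields the objective $x^T Q x + (\text{linear})$, where $\|Q\|_\infty\le \maxval$ (a constant for fixed $\val$), and one folds the linear part into the quadratic form in the standard way by adding a fixed auxiliary variable. The dimension is $d\le 2^k\cdot(k+1)=2^{\Oh{k}}$ and $\|A\|_\infty\le 1$, so \Cref{prop:IQPisFPT} gives an FPT algorithm; taking the best over all $2^{\Oh{k^2}}$ sketches keeps it FPT.

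For \DGIR\ one additionally imposes, for each type $\tau$ and block $j$, the \emph{linear} constraint that the (already linear) expression for $\util(\text{type-}\tau\text{ agent in coalition }j)$ is $\ge 0$ whenever $x_{\tau,j}\ge 1$; since that utility expression is linear in the $x$'s, this is a genuine linear inequality, guarded by the integrality of $x_{\tau,j}$ in the usual big-$M$ fashion (with $M=n$), keeping $\|A\|_\infty$ polynomially bounded---but we must be slightly careful, as \Cref{prop:IQPisFPT} wants $\|A\|_\infty$ in the parameter; instead one branches on the \emph{support} of $(x_{\tau,j})$ (only $2^{\Oh{k}}$ choices) and then the IR constraints become unconditional linear inequalities with $\|A\|_\infty\le\maxval$. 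For \DGNS\ one further needs that no agent wants to move: for each type $\tau$, each current block $j$ (including ``singleton''), and each target block $j'$, the deviation gain is again a fixed linear function of the $x$'s (the would-be distances in $j'$ are determined by the sketch and by which $U$-vertices and types already sit in $j'$), so ``no profitable deviation'' is a conjunction of $2^{\Oh{k}}$ linear inequalities---again added to $A$ after branching on supports.

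The main obstacle I expect is the bookkeeping around distances: one must verify that the guessed induced subgraph on each $U_j$, together with the chosen type-multiplicities, is \emph{realizable} (e.g.\ if $U_j$ is to be connected inside the coalition but $G[U_j]$ is disconnected, connectivity must be provided through $I$-agents, which constrains which types must have $x_{\tau,j}\ge 1$), and that the formulas for $I$-to-$I$ and $I$-to-$U$ distances above are actually the shortest ones---i.e.\ that no shorter path sneaks through a third party. This is handled by observing that any path in $G[\Pi_j]$ alternates between $U_j$ and $I$, so every shortest path is witnessed by an alternating walk whose $U$-vertices lie in $U_j$ and whose internal $I$-vertices can be taken of types already forced to be present; hence the closed-form minima above are correct provided we also guess, as part of the sketch, for each pair $\tau,\tau'$ and block $j$ which "connector" realizes their distance, and add the corresponding "presence" constraints $x_{\tau'',j}\ge 1$. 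All of this multiplies the number of sketches by another factor of $2^{\Oh{k^2}}$, which is still a function of the parameter only, so the overall running time is $f(\vc(G))\cdot n^{\Oh{1}}$, completing the proof.
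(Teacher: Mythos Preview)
Your approach is essentially the paper's: branch on how the vertex cover $U$ splits among coalitions and on which neighbourhood-types are present in each, then encode the remaining counting problem as an IQP whose dimension and coefficient norms are bounded in terms of $k=\vc(G)$, and invoke \Cref{prop:IQPisFPT}. The quadratic objective (from $I$--$I$ pairs) and the linear stability constraints are exactly what the paper uses.

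A few details in your write-up are off, though none is fatal. First, there is nothing to branch over for ``the induced graph on $U_j$ inside its coalition'': the induced subgraph of $G[\Pi_j]$ on $U_j$ is simply $G[U_j]$, so that step is vacuous. Second, your closed-form distance expressions are not correct as stated: shortest paths between two $U$-vertices (and likewise between two $I$-vertices) inside a coalition may zig-zag repeatedly through $I$-vertices, so ``read off $G[U_j]$'' and the ``$+2$'' formulas can overestimate distances. The paper sidesteps this cleanly by branching, upfront and uniformly for all three problems, on exactly which types $I_W$ have at least one agent in each coalition; once that set is fixed, all pairwise distances in the coalition are determined by the template graph on $U_j$ together with one representative per present type, and the IQP coefficients are then constants. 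Your ``support'' branching for \DGIR/\DGNS\ is precisely this, and you should use it for \DGWF\ as well instead of the vaguer per-pair ``connector'' guessing. Finally, your branch counts are understated: the support of $(x_{\tau,j})$ ranges over $2^{k\cdot 2^{k}}$ possibilities, not $2^{\Oh{k}}$, and similarly for the connector step; this does not affect membership in \FPT\ but is worth stating correctly.
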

\begin{proof}[Proof Sketch]
	Let $k = \vc(G)$ and let $U$ be a vertex cover for~$G$ of size~$k$.
	Observe that in each solution there are at most $k$ non-singleton coalitions, since~$G$ has a vertex cover of size~$k$ and each coalition must be connected.
	Furthermore, the vertices of $G - U$ can be partitioned into at most $2^k$ groups according to their neighborhood in the set~$U$.
	That is, there are $n_W$ vertices in $G - U$ such that their neighborhood is~$W$ for some $W \subseteq U$; denote this set of vertices~$I_W$.

	We perform exhaustive branching to determine certain information about the structure of the coalitions in a solution---notably:
	\begin{enumerate}
		\item
		which vertices of $U$ belong to each coalition (i.e., we partition the set $U$); note that there are at most $k^k$ such partitions, and
		\item
		if there is at least one agent of $I_W$ in the coalition or not
		; note that there are at most $(2^{2^k})^k$ such assignments of these sets to the coalitions.
	\end{enumerate}
	We branch over all possible admissible options of the coalitional structure described above possessed by a hypothetical solution. The total number of branches is upper-bounded by a function of the parameter value~$k$ and thus for the problems to be in \FPT it suffices to show that for each branch we can find a solution (if it exists) by a fixed-parameter subprocedure.
	To conclude the proof, we show that a welfare-maximum outcome (which furthermore satisfies the imposed stability constraints) with a given coalitional structure can be computed by modeling this as an Integer Quadratic Program where $d+\|A\|_{\infty}+ \|Q\|_{\infty}$ are all upper-bounded by a function of $k$---such a program can be solved in \FPT time using \Cref{prop:IQPisFPT}.

	The (integer) variables of the program are $x^C_W$, which express the number of vertices from the set $I_W$ in the coalition with $C \subseteq U$; thus, we have $x^C_W \in \mathbb{Z}$ and $x^C_W \ge 1$.
	Let $\mathcal{C}$ be the considered partitioning of the vertex cover~$U$.
	We use $C \in \mathcal{C}$ for the set $C \subseteq U$ in the coalition and $C^+$ for the set $C$ and the guessed groups having at least one agent in the coalition.
	We require that the vertices of $G-U$ are also partitioned in the solution, i.e.,
	\begin{equation}\label{eq:WfIsNsFPTwrtVC:IQP:partition}
		\sum_{C \in \mathcal{C}} \sum_{W \in C^+} x^C_W = n_W \qquad \forall W \subseteq U.
	\end{equation}
	The quadratic objective expresses the welfare of the coalitions in the solution while the linear constraints ensure the stability of the outcome; for the latter, we rely on the fact that it is sufficient to verify the stability for a single agent from the group~$I_W$ in each coalition.
\end{proof}

\section{Conclusions and Future Research Directions}

In this work, we studied social distance games through the lens of an adaptable, non-normalized scoring vector which can capture the positive as well as negative dynamics of social interactions within coalitions.
The main focus of this work was on welfare maximization, possibly in combination with individual-based stability notions---individual rationality and Nash stability.
It is not surprising that these problems are intractable for general networks; we complement our model with algorithms that work well in tree-like environments.

Our work opens up a number of avenues for future research.
One can consider other notions of individual-based stability such as individual stability~\cite[pp.~360--361]{BrandtCELP16}\cite{GanianHKSS22}, or various notions of group-based stability such as core stability~\cite[p.~360]{BrandtCELP16}\cite{BranzeiL2011,OhtaBISY17}.
Furthermore, our results do not settle the complexity of finding stable solutions (without simultaneous welfare maximization).
Therefore, it remains open if one can find a Nash stable solution for a specific scoring vector.
Also, a more complex open problem is to characterize those scoring vectors that guarantee the existence of a Nash (or individually) stable solution.

Finally, we remark that the proposed score-based \DGs\ model can be generalized further, e.g., by allowing for a broader definition of the scoring vectors. For instance, it is easy to generalize all our algorithms to scoring vectors which are not monotone in their ``positive part''. One could also consider situations where the presence of an agent that is ``far away'' does not immediately set the utility of other agents in the coalition to $-\infty$. One way to model these settings would be to consider ``\emph{open}'' scoring vectors, for which we set $\val(a)=\val(\cdiam)$ for all $a>\cdiam$---meaning that distances over $\cdiam$ are all treated uniformly but not necessarily as unacceptable.

Notice that if $\val(\cdiam) \geq 0$ for an open scoring vector $\val$, the grand coalition is always a social-welfare maximizing outcome for all three problems---hence here it is natural to focus on choices of $\val$ with at least one negative entry. We note that all of our fixed-parameter algorithms immediately carry over to this setting for arbitrary choices of open scoring vectors $\val$. The situation becomes more interesting when considering the small-world property: while the diameter of every welfare-maximizing outcome can be bounded in the case of Nash stable or individually rational coalitions (as we prove in our final Theorem~\ref{thm:diam-nashir} below), whether the same holds in the case of merely trying to maximize social welfare is open and seems to be a non-trivial question. Because of this, Theorem~\ref{thm:tw} can also be extended to the \DGIR and \DGNS with open scoring vectors, but it is non-obvious for \DGWF.

\begin{theorem}
	\label{thm:diam-nashir}
	Let $\val=(s_1,\dots,s_{\cdiam})$ be an arbitrary open scoring vector and $G$ be a social network. Every outcome $\Pi$ containing a coalition $C\in\Pi$ with diameter exceeding \(\ell = 2\cdot\maxval\cdot \cdiam\) can be neither Nash-stable nor individually rational.
\end{theorem}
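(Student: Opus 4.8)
The plan is to prove the ostensibly stronger fact that the coalition $C$ itself must contain an agent with strictly negative utility; this yields both conclusions at once, since such an agent witnesses that $\Pi$ is not individually rational, and---because moving to the empty coalition raises its utility to $0$---also that $\Pi$ is not Nash-stable. I will use the global assumption $s_1\ge 1$ together with $s_\delta\le -1$, which holds because $\val$ is integer-valued, non-increasing, and (as singled out just before the theorem) has a negative entry. If $G[C]$ is disconnected then some agent of $C$ already has utility $-\infty$ and there is nothing to do, so I assume $G[C]$ is connected with $\operatorname{diam}(G[C])=D\ge 2s_1\cdiam+1$.

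The first ingredient is a local ``happy $\Rightarrow$ clustered'' bound. Since $\val$ is non-increasing and $\val(\cdiam)=s_\delta\le -1$, every agent contributing a non-negative amount to $\util(z,C)$ lies within distance $\cdiam-1$ of $z$ in $G[C]$; collecting these agents into a set $P_z$, each of them contributes at most $s_1$, while each of the other $|C|-1-|P_z|$ agents contributes at most $-1$, so $\util(z,C)\le (s_1+1)\,|P_z|-(|C|-1)$. Hence $\util(z,C)\ge 0$ forces $|P_z|\ge (|C|-1)/(s_1+1)$, and moreover $P_z\cup\{z\}\subseteq B_{G[C]}(z,\cdiam-1)$.

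The second ingredient uses the long diameter to produce at least $s_1+1$ agents whose radius-$(\cdiam-1)$ balls are pairwise disjoint. Fix a diametral pair $u,v$ and a shortest path $p_0,\dots,p_D$ from $u$ to $v$ in $G[C]$, so that $d_{G[C]}(p_i,p_j)=|i-j|$; set $q=2\cdiam-1$ and take $y_k:=p_{kq}$ for $k=0,\dots,t$ where $t=\lfloor D/q\rfloor$. Since distinct $y_k$'s are at path-distance at least $q>2(\cdiam-1)$, the balls $B_{G[C]}(y_k,\cdiam-1)$ are pairwise disjoint, which gives $\sum_{k=0}^{t}(|P_{y_k}|+1)\le |C|$; and $D\ge 2s_1\cdiam+1>s_1(2\cdiam-1)=s_1q$ gives $t\ge s_1$. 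Now suppose towards a contradiction that $\Pi$ is individually rational: then $\util(y_k,C)\ge 0$, hence $|P_{y_k}|\ge (|C|-1)/(s_1+1)$, for every $k$; summing over the $t+1\ge s_1+1$ chosen agents and combining with the disjointness bound, then clearing denominators and rearranging, yields $(t+1)(|C|+s_1)\le (s_1+1)|C|$, which is impossible since $(t+1)(|C|+s_1)\ge (s_1+1)(|C|+s_1)>(s_1+1)|C|$. Thus $C$ contains a negative-utility agent, and the two conclusions follow as described.

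I expect the real obstacle to be conceptual rather than computational: no single agent---not even the midpoint of the diametral path---can be guaranteed to be unhappy, because $C$ might be ``fattened'' with many extra agents that keep that midpoint content while a thin stretch still forces a large diameter. The resolution is to spread $\Theta(D/\cdiam)$ sufficiently well-separated agents along a shortest diametral path and play their individual lower bounds $|P_{y_k}|\ge (|C|-1)/(s_1+1)$ off against the global cap $\sum_k(|P_{y_k}|+1)\le |C|$ coming from disjointness of their neighbourhoods; the threshold $\ell=2s_1\cdiam$ is precisely what pushes the number of such agents past $s_1$, which is exactly what the counting argument needs.
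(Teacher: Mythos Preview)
Your argument is correct. The assumptions you extract from the surrounding text ($s_1\ge 1$ integer, $s_\cdiam\le -1$ since the vector is integer, non-increasing, and has a negative entry) are exactly what is needed, and both ingredients---the local bound $|P_z|\ge (|C|-1)/(s_1+1)$ and the packing bound $\sum_k(|P_{y_k}|+1)\le |C|$---are derived cleanly. The arithmetic at the end checks out: with $t+1\ge s_1+1$ one gets $(t+1)(|C|+s_1)\ge (s_1+1)|C|+(s_1+1)s_1>(s_1+1)|C|$, contradicting the inequality forced by individual rationality.

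Your route differs from the paper's. The paper's proof works with a family of \emph{edge cuts} taken along the diametral path: it tracks, for each cut, how many agents of~$C$ fall far from the cut, and argues that some cut must be adjacent to an agent who therefore sees too many distant agents to have non-negative utility. You instead pick $\Theta(D/\cdiam)$ vertices on the path at spacing $2\cdiam-1$ and use the disjointness of their radius-$(\cdiam-1)$ balls to run a pure packing/counting argument. The two proofs exploit the same tension---long diameter forces too many ``far'' agents somewhere---but your formulation is arguably more self-contained and avoids the case analysis over cuts; the paper's cut-based view, on the other hand, makes the location of the unhappy agent more explicit (near a specific cut), which can be convenient if one later wants a constructive deviation.
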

\begin{proof}[Proof Sketch]
	Consider a shortest path $P$ in $C$ whose length exceeds $\ell$. We identify a set of edge cuts along $P$ and show that at least one such cut must be near an agent whose utility in $C$ is negative, due to the presence of a large number of agents that must be distant from the chosen edge cut.
\end{proof}

\paragraph*{Acknowledgements.}
All authors are grateful for support from the OeAD bilateral Czech-Austrian WTZ-funding Programme (Projects No. CZ 05/2021 and 8J21AT021). Robert Ganian acknowledges support from the Austrian Science Foundation (FWF, project Y1329). Thekla Hamm also acknowledges support from FWF, project J4651-N. Dušan Knop, Šimon Schierreich, and Ondřej Suchý acknowledge the support of the Czech Science Foundation Grant No.~22-19557S. Šimon Schierreich was additionally supported by the Grant Agency of the Czech Technical University in Prague, grant \mbox{No.~SGS23/205/OHK3/3T/18}.

\bibliographystyle{eptcs}
\bibliography{references}

\end{document}